
\documentclass{article}




\usepackage{bbm}
\usepackage{amsmath}
\usepackage{amssymb}
\usepackage{xargs}
\usepackage{amsthm}
\usepackage{graphicx}
\usepackage{pstricks,pst-plot}

%


\newcommand{\estimp}{\theta}
\def\esp{\mathbb E}
\def\Rset{\mathbb{R}}
\def\pr{\mathbb P}
\def\convvague{\stackrel{\mbox{\tiny\rm v}}{\to}} 
\def\rmd{\mathrm{d}} 
\def\constant{\aleph}
\newcommand\1[1]{\mathbbm{1}_{#1}}
\newcommand\uncompact[1]{\overline{\Rset}^{#1}\setminus\{\vectorbold0\}}
\newcommand{\vectorbold}[1]{\boldsymbol{#1}}
\newcommand{\tail}[1]{\bar{#1}}
\newcommandx{\numult}[2][2=]{\boldsymbol{\nu}_{\vectorbold{#1}_{#2}}} 
\newcommandx{\scalingseq}[1][1=n]{c_{#1}}  
\newcommandx\process[1][1=X]{\mathbb{#1}}
\def\convweak{\Rightarrow}   
\newcommand\spaceD{\mathbb{D}}  
\def\cov{\mathrm{cov}}
\newcommand\TEDlimit{T}
\newcommandx{\tepseq}[1][1=n]{u_{#1}}  
\def\convprob{\stackrel{\mbox{\small\tiny p}}{\to}}
\def\var{\mathrm{var}}
\newcommand\statinterseq{k}
\newcommandx\orderstat[2][1=X]{#1_{#2}}
\def\convdistr{\stackrel{\mbox{\tiny\rm d}}{\to}} 

\newtheorem{theorem}{Theorem}
\newtheorem{lemma}[theorem]{Lemma}

\newtheorem{corollary}[theorem]{Corollary}
\newtheorem{proposition}[theorem]{Proposition}
\newtheorem{definition}{Definition}

\newtheorem{remark}{Remark}

\begin{document}
\title{Multivariate Tail Estimation: Conditioning on an extreme event}

\author{Rafa{\l} Kulik\thanks{Corresponding author: rkulik@uottawa.ca} \and Zhigang Tong}
\maketitle

\begin{abstract}
We consider regularly varying random vectors. Our goal is to estimate in a non-parametric way some characteristics related to conditioning on an extreme event, like the tail dependence coefficient. We introduce a quasi-spectral decomposition that allow to improve efficiency of estimators. Asymptotic normality of estimators is based on weak convergence of tail empirical processes.
Theoretical results are supported by simulation studies.
\end{abstract}


\section{Introduction}
Assume that $(X,Y)$ is a regularly varying random vector with index $\alpha$ and $F$ is the marginal distribution of $X$.
When dealing with extreme observations, we are often interested in estimating
\begin{align}\label{eq:to-estimate}
\esp\left[\psi\left(\frac{X}{x},\frac{Y}{x}\right) \mid (X,Y)\in xC\right]\;,
\end{align}
where $\psi:\Rset^2\to \Rset$, $C$ is a suitably chosen subset of $\uncompact{2}$ and $x$ is large. For example, $x$ can be chosen as $x=x_p=F^{\leftarrow}(1-p)$, where $p$ is small (The value $x_p$ is called in financial applications the \textit{Value-at-Risk}).
Special cases include estimation of the conditional tail distribution
\begin{align}\label{eq:to-estimate-conddistr}
1-G(y)=\lim_{x\to\infty}\pr(Y>y x\mid X>x)\;,
\end{align}
estimation of the conditional tail expectation (expected shortfall)
\begin{align*}
\lim_{x\to\infty}\esp\left[(Y/x) \mid X>x\right]\;,
\end{align*}
or extremal dependence measure
\begin{align*}
\lim_{x\to\infty}\esp\left[\frac{XY}{\|(X,Y)\|^2} \mid  \|(X,Y)\|>x\right]\;,
\end{align*}
where $\|\cdot\|$ is a vector norm on $\Rset^2$. The first problem is linked to
estimation of the tail dependence coefficient, the second one to modeling of the expected shortfall (\cite{cai:einmahl:haan:zhou:2012}), while the last one was introduced and studied in \cite{larsson:resnick:2009}.

In specific cases estimators of (\ref{eq:to-estimate}) can be obtained in a parametric or semi-parametric way and rely on a particular model chosen. Alternatively, one can consider nonparametric approaches (see \cite[Chapter 9]{beirlantetal:2004} for related theory and methods, as well as an extensive list of references). Specifically, having an i.i.d. sample
$(X_i,Y_i)$, $i=1,\ldots,n$, from $(X,Y)$,
estimation of the conditional tail distribution in (\ref{eq:to-estimate-conddistr}) can be achieved by
\begin{align}\label{eq:estimator-conddistr}
\frac{1}{k}\sum_{j=1}^n\1{\{Y_i> y X_{n:n-k},X_i>X_{n:n-k}\}}\;,
\end{align}
where $k$ is a deterministic sequence such that $k\to\infty$, $k/n\to 0$ and $X_{n:1}\leq \cdots\leq X_{n:n}$ are order statistics.
However, in order to provide reliable estimates of the conditional tail distribution one needs an appropriate number of pairs of observations such that the both components exceed the level $X_{n:n-k}$. This usually requires a very large number of observations. In summary, the estimator (\ref{eq:estimator-conddistr}) may not be particularly useful in practice.\\

We propose an alternative nonparametric approach to estimating the conditional tail distribution and more generally to estimating the expressions like the one in (\ref{eq:to-estimate}). The idea comes from \cite{basrak:segers:2009}, who considered regularly varying time series and defined a spectral and a tail spectral process. More specifically, in our context of bivariate vectors, regular variation implies that $(X/x,Y/x)$ conditionally on $X>x$ converges in distribution (when $x\to\infty$) to a random vector $(V_1\Theta_1,V_1 \Theta_2)$, where $V_1$ has a standard Pareto distribution, $\Theta_1$ is concentrated at $\{-1,1\}$, while $(\Theta_1,\Theta_2)$ is independent of $V_1$. Furthermore, $\Theta_2$ is a distributional limit of $Y/X$ given that $X>x$ and $x\to\infty$. The representation of the limiting vector is similar to the standard spectral decomposition (see \cite[Section 8.2.3]{beirlantetal:2004} or \cite[Section 6.1.2]{resnick:2007}), however, in our case the vector $(\Theta_1,\Theta_2)$ does not lie on a unit circle. Hence, we will call $(V_1,\Theta_1,\Theta_2)$ the \textit{quasi-spectral} decomposition.

As a consequence, if we assume for simplicity that all random variables are nonnegative, then the conditional tail distribution   can be expressed in terms of $\Theta_2$ as
\begin{align*}
\lim_{x\to\infty}\pr(Y> yx\mid X>x)=\esp\left[\left(\frac{\Theta_2}{y}\wedge 1\right)^{\alpha}\right]=\lim_{x\to\infty}
\esp\left[\left(\frac{Y}{yX}\wedge 1\right)^{\alpha}\mid X>x\right]\; .
\end{align*}
Thus, the estimator (\ref{eq:estimator-conddistr}) can be replaced with
\begin{align}\label{eq:estimator-conddistr-viasp}
\frac{1}{k}\sum_{j=1}^n\left(\frac{Y_j}{yX_j}\right)^{\alpha}\1{\{X_i>X_{n:n-k}\}}\;.
\end{align}
We will argue below that the estimator (\ref{eq:estimator-conddistr-viasp}) is more efficient than the one in (\ref{eq:estimator-conddistr}) (see also \cite{drees:segers:warchol:2014} in a different context of time series). Of course, if $\alpha$ is unknown, it needs to be replaced with its estimator, however, we will provide conditions that guarantee that estimation of $\alpha$ does not influence the limiting behaviour of the estimator of the conditional tail distribution. This observation will be also confirmed by simulation studies. Also, we note that the bivariate case can be easily extended to a general multivariate situation, still requiring only one component to be large.

Furthermore, the quasi-spectral decomposition can be useful in approximating the expected shortfall. It turns out that
\begin{align*}
\lim_{x\to\infty}x^{-1}\esp[Y\mid X>x]=\esp[V_1]\esp[\Theta_2]=\frac{\alpha}{\alpha-1}\lim_{x\to\infty}\esp\left[(Y/X)\mid X>x\right]
\end{align*}
whenever $\alpha>1$. Using the above identity we can construct two estimators of the expected shortfall. Asymptotic normality of an estimator that is based on the left-hand side of the above expression requires finiteness of the second moment, while an estimator motivated by the quasi-spectral representation on the right-hand side may have finite variance even when $\alpha\in (1,2)$.\\

In summary, the proposed estimation procedure based on the quasi-spectral representation may lead to improvement in terms of efficiency or in terms of the conditions required to achieve asymptotic normality, as compared to other nonparametric methods.\\

In order to support our statement, we proceed as follows.
In Section \ref{sec:prel} we recall the concept of multivariate regular variation (see \cite{resnick:2007}), followed by the quasi-spectral decomposition (Section \ref{sec:quasi-spectral}). We link it to the conditional tail distribution (Section \ref{sec:conddistr}) and the conditional tail expectation (Section \ref{sec:cte-1}).
We note that we present that section in a general framework of $d$-dimensional vectors. In Section \ref{sec:weakconv} we consider weak convergence of tail empirical processes based on deterministic and random levels. The theory is used to construct estimators of (\ref{eq:to-estimate}). Furthermore, some of the results in \cite{larsson:resnick:2009} and \cite{cai:einmahl:haan:zhou:2012} can be concluded from ours. The specific cases of the conditional tail distribution and the conditional tail expectation are discussed in Sections \ref{sec:conddistr-1} and \ref{sec:cte}, respectively. In the latter section we link our results to the estimation procedure in \cite{cai:einmahl:haan:zhou:2012}. In Section \ref{sec:sim} we conduct extensive simulation studies that show usefulness of our approach, while in the following one we apply our procedure to estimation of the tail dependence coefficient for some real data. Some technical details of proofs can be found in Section \ref{sec:technical-details}.
We finish our paper by addressing several technical issues like different marginals and directions of future research.

\section{Preliminaries}\label{sec:prel}
We start with some notation that will be used throughout the paper.
Unless otherwise stated, by $\vectorbold{y}$ we denote a vector $(y_1,\ldots,y_d)$.
For a vector $\vectorbold{y}$ we write $(\vectorbold{y},\vectorbold{\infty}]=(y_1,\infty]\times\cdots\times (y_d,\infty]$. For $C\subseteq \Rset^d$ and $y>0$ we denote $yC=\{y \vectorbold{x}: \vectorbold{x}\in C\}$.
As usual, for a given distribution $F$, we write $\tail{F}(x)=1-F(x)$.
\subsection{Multivariate regular variation}
We start with the following definition (see e.g. \cite[Theorem 6.1]{resnick:2007}).
\begin{definition}
  \label{def:mrv}
  A vector $\vectorbold{X}=(X_1,\ldots,X_d)$ in $\Rset^{d}$ is (multivariate)
  regularly varying
  if there exists a non zero Radon measure $\numult{X}$ on $\uncompact{d}$,
  called the exponent measure of $\vectorbold{X}$, such that
  $\numult{X}(\overline{\Rset}^{d}\setminus\Rset^{d})=0$ and a scaling sequence
  $\{\scalingseq\}$ such that the measure
  $n\pr(\scalingseq^{-1}\vectorbold{X}\in \cdot)$ converges vaguely on
  $\uncompact{d}$ to~the measure~$\numult{X}$, i.e.
\begin{align}
  \label{eq:def-regvar-mult}
  n \pr(\scalingseq^{-1}\vectorbold{X}\in \cdot) \convvague \numult{X} \; , \ \mbox{
    on $\uncompact{d}$} \; .
\end{align}
\end{definition}
The limiting measure is homogeneous with some index $-\alpha$, that is
$\numult{X}(yC)=y^{-\alpha}\numult{X}(C)$ for any $y>0$ and a relatively compact set $C$.
We call $-\alpha$ \textit{the index of regular variation} of $\vectorbold{X}$.

In what follows, we will assume that all components $X_i$ have the same distribution $F$ (see also
Section \ref{sec:technical-details} for extensions)
 and are nonnegative (the latter assumption is purely technical and can be easily relaxed). Then
\begin{align*}
  \lim_{x\to\infty} \frac{\pr(x^{-1}\vectorbold{X}\in A)}{\tail{F}(x)} =
  \frac{\numult{X}(A)} {\numult{X}(\{\vectorbold{x}:|x_1|>1\})} \equiv\numult{}(A) \; .
\end{align*}

\subsection{Quasi-spectral decomposition}\label{sec:quasi-spectral}
We can link vague convergence to weak convergence of
conditional probabilities. In particular, for relatively compact sets $A$, $B$ in $\uncompact{i-1}$, $\uncompact{d-i}$,
\begin{align*}
  \lim_{x\to\infty} \pr(x^{-1}\vectorbold{X}\in A\times (y,\infty]\times B\mid X_i>x)
  =\frac{\numult{X} \left(A\times (y,\infty]\times B\right) }{\numult{X}
    (\overline{\Rset}^{i-1}\times (1,\infty]\times\overline{\Rset}^{d-i})} \; .
\end{align*}
In this spirit, regular variation implies a \textit{quasi-spectral decomposition}. In time series context this approach was used in~\cite{basrak:segers:2009}.
\begin{proposition}\label{prop:quasi-spectral}
Let $\vectorbold{X}$ be a regularly varying random vector with non-negative regularly varying components with index $-\alpha$. Then conditionally on $X_1>x$, as $x\to\infty$
\begin{align*}
x^{-1}(X_1,\ldots,X_d)\;, \qquad \left(\frac{X_1}{x},\frac{X_2}{X_1},\ldots,\frac{X_d}{X_1}\right)
\end{align*}
converge in distribution to $(V_1,\ldots,V_d)$  and $(V_1,\Theta_2,\ldots,\Theta_d)$, where
\begin{enumerate}
\item $V_1$ has the Pareto distribution with index $-\alpha$;
\item $\Theta_j=V_j/V_1$, $j=2,\ldots,d$ and $(\Theta_2,\ldots,\Theta_d)$ is independent of $V_1$.
\end{enumerate}
\end{proposition}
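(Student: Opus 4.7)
The plan is to read off the joint conditional distribution directly from the vague convergence \eqref{eq:def-regvar-mult} and then use the homogeneity of $\numult{X}$ to separate the radial and spectral parts.

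First, I would use the statement displayed just above the proposition: for relatively compact $B\subseteq \uncompact{d}$ with $\numult{X}(\partial B)=0$ we have
\begin{align*}
\lim_{x\to\infty}\pr(x^{-1}\vectorbold{X}\in B\mid X_1>x)=\frac{\numult{X}\!\left(B\cap\{y_1>1\}\right)}{\numult{X}(\{y_1>1\})}\; .
\end{align*}
This identifies the conditional limit law $\mathcal{L}(V_1,\ldots,V_d)$ as the normalization of $\numult{X}$ restricted to $\{y_1>1\}$, which is finite by the very definition of regular variation and nondegeneracy. So the first convergence follows at once.

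Next, to obtain the convergence of $(X_1/x,X_2/X_1,\ldots,X_d/X_1)$, I would apply the continuous mapping theorem to the map $\Phi(y_1,\ldots,y_d)=(y_1,y_2/y_1,\ldots,y_d/y_1)$, which is well defined and continuous on $\{y_1>0\}$, hence in particular on the support $\{y_1>1\}$ of the conditional limit law. This gives
\begin{align*}
(X_1/x,X_2/X_1,\ldots,X_d/X_1)\mid X_1>x \;\convdistr\; (V_1,V_2/V_1,\ldots,V_d/V_1)=(V_1,\Theta_2,\ldots,\Theta_d)\; .
\end{align*}

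The key step is then to verify that $V_1$ is Pareto$(\alpha)$ and that $(\Theta_2,\ldots,\Theta_d)$ is independent of $V_1$. For any $v>1$ and Borel sets $A_2,\ldots,A_d$, the set
\begin{align*}
B_v=\{\vectorbold{y}:y_1>v,\;y_2/y_1\in A_2,\ldots,y_d/y_1\in A_d\}
\end{align*}
satisfies $B_v=v\cdot B_1$, since the ratios $y_j/y_1$ are scale-invariant. Homogeneity of index $-\alpha$ of $\numult{X}$ then gives $\numult{X}(B_v)=v^{-\alpha}\,\numult{X}(B_1)$, so that
\begin{align*}
\pr(V_1>v,\Theta_2\in A_2,\ldots,\Theta_d\in A_d)=v^{-\alpha}\,\frac{\numult{X}(B_1)}{\numult{X}(\{y_1>1\})}\; ,
\end{align*}
which factorizes as $\pr(V_1>v)\cdot\pr(\Theta_2\in A_2,\ldots,\Theta_d\in A_d)$, identifying simultaneously the Pareto law of $V_1$ and the independence from the spectral part.

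The only delicate point I anticipate is the continuous mapping step, since $\Phi$ is not continuous at $\{y_1=0\}$ and the underlying space $\uncompact{d}$ has nonstandard topology; I would handle it by noting that the conditional limit measure assigns mass zero to $\{y_1\le 1\}$, so the discontinuity set is negligible and the standard continuous mapping argument applies. Everything else reduces to the scaling identity $B_v=v\,B_1$.
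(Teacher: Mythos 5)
Your argument is correct, and the first two steps (identifying the law of $(V_1,\ldots,V_d)$ as the normalized restriction of $\numult{X}$ to $\{y_1>1\}$, then pushing it forward under $\Phi(y_1,\ldots,y_d)=(y_1,y_2/y_1,\ldots,y_d/y_1)$) coincide in substance with the paper's proof, which phrases the continuous-mapping step as a change of test functions $g\mapsto f$ with $f(u_1,\ldots,u_d)=g(u_1,u_2/u_1,\ldots,u_d/u_1)$. Where you genuinely diverge is the key step, the Pareto law of $V_1$ and its independence from $(\Theta_2,\ldots,\Theta_d)$: you work entirely in the limit, using the scale-invariance of the ratios to get $B_v=vB_1$ and then the $-\alpha$-homogeneity of $\numult{X}$ to factorize $\pr(V_1>v,\Theta_2\in A_2,\ldots,\Theta_d\in A_d)=v^{-\alpha}\,\numult{X}(B_1)/\numult{X}(\{y_1>1\})$. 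The paper instead argues at the pre-limit level: it writes $\pr(X_1/x>y,\,X_j/X_1\in A_j\mid X_1>x)$ as $\pr(X_j/X_1\in A_j\mid X_1>xy)\cdot\pr(X_1>xy)/\pr(X_1>x)$, sends $x\to\infty$ to obtain $\pr(V_j/V_1\in A_j)\,\pr(V_1>y)$ from the marginal regular variation of $X_1$ and the convergence of the angular part at the higher threshold $xy$, and compares with the direct limit. Your route is cleaner and makes the role of homogeneity explicit (which the paper has already recorded after Definition~\ref{def:mrv}); the paper's route has the virtue of exhibiting independence as an asymptotic factorization of actual conditional probabilities without ever manipulating the exponent measure. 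Two small points you should make explicit to be fully rigorous: the sets $B_v$ must be continuity sets of the limit law (enough to take the $A_j$ from a measure-determining class of intervals with negligible boundary), and the discontinuity of $\Phi$ on $\overline{\Rset}^{d}\setminus\Rset^{d}$ is harmless because $\numult{X}(\overline{\Rset}^{d}\setminus\Rset^{d})=0$ by assumption, in addition to the negligibility of $\{y_1\le 1\}$ that you already note.
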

\begin{proof}
A proof is given in Section~\ref{sec:proof-of-quasispectral}
\end{proof}
\begin{remark}{\rm
 Throughout the paper the quasi spectral-decomposition into $V_1$ and $(\Theta_2,\ldots,\Theta_d)$ is obtained by conditioning on $X_1$. We can condition on $X_{j}$ for any $j$. Note however that for each different $j$ we get different vectors $\vectorbold{V}$ (that depend formally on $j$).
}
\end{remark}
\subsection{Representation of conditional tail distribution}\label{sec:conddistr}
We use the quasi-spectral representation to express the conditional tail distribution.
\begin{corollary}\label{cor:lim-cond-prob}
Let $\vectorbold{X}$ be a regularly varying random vector with non-negative regularly varying components with index $-\alpha$. Then for $j_2,\ldots,j_l,j_{l+1},\ldots,j_d\in \{1,\ldots,d\}$ and $y_j>0$ we have
\begin{align}\label{eq:lim-cond-prob-via-quasisp}
&\lim_{x\to\infty}\pr(X_{j_{l+1}}>y_{j_{l+1}}x,\ldots,X_{j_d}>y_{j_d}x \mid X_1>x,X_{j_{2}}>x,\ldots,X_{j_{l}}>x)
\nonumber\\
&=
\frac{\esp\left[\left(\frac{\Theta_{j_{l+1}}}{y_{j_{l+1}}}\right)^{\alpha} \wedge \cdots\wedge
\left(\frac{\Theta_{j_{d}}}{y_{j_{d}}}\right)^{\alpha} \wedge \Theta_{j_{2}}^{\alpha}\wedge\cdots \Theta_{j_{l}}^{\alpha} \wedge 1  \right]}{\esp\left[ \Theta_{j_{2}}^{\alpha}\wedge\cdots \Theta_{j_{l}}^{\alpha} \wedge 1  \right]}\; .
\end{align}
\end{corollary}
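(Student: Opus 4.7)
The plan is to reduce the conditional probability to a ratio of two joint tail probabilities and then convert each one, via Proposition \ref{prop:quasi-spectral}, into an expectation over the quasi-spectral variables $(V_1,\Theta_2,\ldots,\Theta_d)$.

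First, I would write
\begin{align*}
&\pr(X_{j_{l+1}}>y_{j_{l+1}}x,\ldots,X_{j_d}>y_{j_d}x \mid X_1>x,X_{j_2}>x,\ldots,X_{j_l}>x) \\
&\qquad = \frac{\pr(X_1>x,\,X_{j_2}>x,\ldots,X_{j_l}>x,\,X_{j_{l+1}}>y_{j_{l+1}}x,\ldots,X_{j_d}>y_{j_d}x)}{\pr(X_1>x,\,X_{j_2}>x,\ldots,X_{j_l}>x)}\;,
\end{align*}
and divide both numerator and denominator by $\pr(X_1>x)$. This turns each into a probability conditional on $\{X_1>x\}$, to which Proposition \ref{prop:quasi-spectral} directly applies.

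Next, using the convention $\Theta_1=1$, I would rewrite each event $\{X_{j_k}>y_{j_k}x\}$ (with $y_{j_k}=1$ for $k\leq l$) as $\{(X_{j_k}/X_1)(X_1/x)>y_{j_k}\}$, which in the limit becomes $\{\Theta_{j_k}V_1>y_{j_k}\}$. Conditioning on $(\Theta_2,\ldots,\Theta_d)$ and using independence from $V_1$ together with $\pr(V_1>v)=v^{-\alpha}\wedge 1$, the limiting probability of the intersection reads
\begin{align*}
\pr\bigl(V_1>\max_{k}(y_{j_k}/\Theta_{j_k})\,\big|\,\vectorbold{\Theta}\bigr)
= \Bigl(\tfrac{\Theta_{j_{l+1}}}{y_{j_{l+1}}}\Bigr)^{\alpha}\!\wedge\cdots\wedge\Bigl(\tfrac{\Theta_{j_d}}{y_{j_d}}\Bigr)^{\alpha}\!\wedge\Theta_{j_2}^{\alpha}\wedge\cdots\wedge\Theta_{j_l}^{\alpha}\wedge 1\;.
\end{align*}
Taking expectation gives the numerator of (\ref{eq:lim-cond-prob-via-quasisp}); the same argument with no $y_{j_k}$-factors (i.e.\ dropping the first $d-l$ indices) gives the denominator. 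The ratio is exactly the claimed expression.

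The main technical point I would need to handle carefully is that weak convergence delivered by Proposition \ref{prop:quasi-spectral} yields convergence of probabilities only at continuity sets of the limiting law. Since each $y_{j_k}>0$ and $V_1$ is absolutely continuous, the boundary $\{\Theta_{j_k}V_1=y_{j_k}\}$ has limit probability zero; indeed, the limiting distribution places no mass on such hyperplanes because of the Pareto density of $V_1$ and the independence of $V_1$ and $(\Theta_2,\ldots,\Theta_d)$, so the Portmanteau theorem applies to both numerator and denominator. The remaining step is a routine interchange of limits and conditional expectation, so there is no real obstacle beyond carefully checking the continuity sets.
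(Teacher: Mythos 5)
Your proposal is correct and follows essentially the same route as the paper: write the conditional probability as a ratio of probabilities conditioned on $\{X_1>x\}$, apply Proposition \ref{prop:quasi-spectral}, and use the independence of $V_1$ from $(\Theta_2,\ldots,\Theta_d)$ to evaluate the limit as an expectation of a minimum of powers of the $\Theta$'s. The only cosmetic difference is that you condition on $\vectorbold{\Theta}$ and use the Pareto tail of $V_1$, whereas the paper integrates over the Pareto density of $V_1$ first — the same Fubini computation in the opposite order — and your explicit check of the continuity-set condition for the Portmanteau theorem is a welcome detail the paper leaves implicit.
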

\begin{proof}
Proposition \ref{prop:quasi-spectral} implies that for $y_1\geq 1$, $y_2,\ldots,y_d>0$,
\begin{align*}
&\lim_{x\to\infty}\pr(X_{1}>y_{1}x,\ldots,X_{d}>y_{d}x \mid X_1>x)= \pr(V_1>y_1,\ldots,V_d>y_d)\\
&=\pr(V_1>y_1,V_1\Theta_2>y_2,\ldots,V_1\Theta_d>y_d)\\
&=\alpha\int_{y_1\vee 1}^{\infty}\pr(\Theta_2>y_2/u,\ldots,\Theta_d>y_d/u)u^{-\alpha-1}\rmd u\\
&= \alpha\int_{y_1\vee 1}^{\infty}
\pr\left(\left(\frac{\Theta_2}{y_2}\right)^{\alpha}>u^{-\alpha},\ldots,\left(\frac{\Theta_d}{y_d}\right)^{\alpha}>u^{-\alpha}\right)u^{-\alpha-1}\rmd u \\
&=\esp\left[\left(\frac{1}{y_1}\wedge\frac{\Theta_2}{y_2}\wedge \cdots\frac{\Theta_d}{y_d}\right)^{\alpha}\right]\; .
\end{align*}
Furthermore,
\begin{align*}
&\pr(X_{j_{l+1}}>y_{j_{l+1}}x,\ldots,X_{j_d}>y_{j_d}x \mid X_1>x, X_{j_{2}}>x,\ldots,X_{j_{l}}>x) =\\
&=\frac{\pr(X_{j_{2}}>x,\ldots,X_{j_{l}}>x, X_{j_{l+1}}>y_{j_{l+1}}x,\ldots,X_{j_d}>y_{j_d}x\mid X_1>x)}
{\pr(X_{j_{2}}>x,\ldots,X_{j_{l}}>x\mid X_1>x)}
\end{align*}
and the result follows.
\end{proof}
We note that the numerator and the denumerator in (\ref{eq:lim-cond-prob-via-quasisp}) can be expressed as limits. In particular, via Proposition \ref{prop:quasi-spectral}, the numerator in (\ref{eq:lim-cond-prob-via-quasisp}) equals
\begin{align*}
\lim_{x\to\infty}\esp\left[g\left(\frac{X_{j_2}}{X_1},\ldots,\frac{X_{j_l}}{X_1},\frac{X_{j_{l+1}}}{y_{j_{l+1}}X_1},\ldots,
\frac{X_{j_{d}}}{y_{j_{d}}X_1}\right)\mid X_1>x\right]\;
\end{align*}
with a bounded and continuous function
$g(u_2,\ldots,u_d)=(u_2\wedge \cdots \wedge u_d\wedge 1)^{\alpha}$.
Consequently, for $y>0$, and setting $(X_1,X_2)=(X,Y)$
\begin{align}\label{eq:lcp-quasisp-d=2}
\lim_{x\to\infty}\pr(Y>y x \mid X>x)= \esp\left[\left(\frac{\Theta_2}{y}\wedge 1\right)^{\alpha}\right]=\lim_{x\to\infty}\esp\left[\left(\frac{Y}{yX}\wedge 1\right)^{\alpha}\mid X>x\right]\; .
\end{align}

\subsection{Representation of conditional tail expectation}\label{sec:cte-1}
\begin{corollary}\label{cor:lim-cond-exp}Let $\vectorbold{X}$ be a regularly varying random vector with non-negative regularly varying components with index $-\alpha$.
Assume moreover that for some $\delta>0$ we have
\begin{align}\label{eq:moment-bound}
\sup_{x>0}\esp\left[\left(\frac{X_{j_d}}{x}\right)^{1+\delta} \mid X_1>x, X_{j_{2}}>x,\ldots,X_{j_{l}}>x\right]<\infty\; .
\end{align}
Then
\begin{align*}
\esp\left[\frac{X_{j_d}}{x} \mid X_1>x, X_{j_{2}}>x,\ldots,X_{j_{l}}>x\right]=\frac{\alpha}{\alpha-1}\frac{\esp\left[\Theta_{j_d} \left(\Theta_{j_2}\wedge \cdots \wedge \Theta_{j_l}\wedge 1\right)^{\alpha-1} \right]\; .
}{\esp[\Theta_{j_1}^\alpha\wedge \ldots \wedge \Theta_{j_l}^\alpha\wedge 1]}.
\end{align*}
\end{corollary}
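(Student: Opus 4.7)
The plan is to write the conditional expectation as a ratio, pass to the limit in numerator and denominator separately using the quasi-spectral decomposition of Proposition \ref{prop:quasi-spectral}, and use the moment bound (\ref{eq:moment-bound}) to upgrade convergence in distribution to convergence of expectations.

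First, I would write
\begin{align*}
\esp\left[\frac{X_{j_d}}{x} \,\Big|\, X_1>x, X_{j_2}>x,\ldots,X_{j_l}>x\right]
= \frac{\esp\left[\frac{X_1}{x}\cdot\frac{X_{j_d}}{X_1}\prod_{i=2}^{l}\1{\frac{X_1}{x}\cdot\frac{X_{j_i}}{X_1}>1} \,\Big|\, X_1>x\right]}{\pr\left(X_{j_2}>x,\ldots,X_{j_l}>x \,\big|\, X_1>x\right)}\; ,
\end{align*}
after multiplying numerator and denominator by $\pr(X_1>x)$ and rewriting each indicator $\1{X_{j_i}>x}$ in terms of the ratio $X_{j_i}/X_1$ and $X_1/x$. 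By Corollary \ref{cor:lim-cond-prob} (applied with $y_j=1$), the denominator converges to $\esp[\Theta_{j_2}^{\alpha}\wedge\cdots\wedge\Theta_{j_l}^{\alpha}\wedge 1]$, which matches the denominator of the claimed formula.

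For the numerator, Proposition \ref{prop:quasi-spectral} gives the joint weak convergence of $(X_1/x,X_{j_2}/X_1,\ldots,X_{j_d}/X_1)$ to $(V_1,\Theta_{j_2},\ldots,\Theta_{j_d})$ under the conditional law $\pr(\cdot\mid X_1>x)$. The integrand
$h(u,v_2,\ldots,v_d)=u\,v_d\prod_{i=2}^{l}\1{u v_i>1}$
is continuous on the relevant set (up to a boundary of zero limit-measure), so the continuous mapping theorem yields convergence in distribution of $h$ evaluated at the prelimit. Using independence of $V_1$ and $(\Theta_{j_2},\ldots,\Theta_{j_d})$, condition on the latter and integrate $V_1$ against its Pareto density $\alpha u^{-\alpha-1}\1{u\geq 1}$ over the set where $V_1>1/(\Theta_{j_2}\wedge\cdots\wedge\Theta_{j_l})$: since $V_1\geq 1$, the effective lower limit is $1/(\Theta_{j_2}\wedge\cdots\wedge\Theta_{j_l}\wedge 1)$, and one computes
\begin{align*}
\esp\left[V_1\Theta_{j_d}\prod_{i=2}^{l}\1{V_1\Theta_{j_i}>1}\right]
=\frac{\alpha}{\alpha-1}\,\esp\!\left[\Theta_{j_d}\left(\Theta_{j_2}\wedge\cdots\wedge\Theta_{j_l}\wedge 1\right)^{\alpha-1}\right]\; ,
\end{align*}
which is the numerator in the target formula.

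The main obstacle, and the only place where the hypothesis (\ref{eq:moment-bound}) is used, is upgrading distributional convergence of $h$ to convergence of expectations, since $h$ is unbounded on account of the factor $u v_d$. Here I would establish uniform integrability of the random variables $\frac{X_{j_d}}{x}\prod_{i=2}^{l}\1{X_{j_i}>x}$ under $\pr(\cdot\mid X_1>x)$ by bounding
\begin{align*}
\esp\!\left[\left(\frac{X_{j_d}}{x}\right)^{1+\delta}\prod_{i=2}^{l}\1{X_{j_i}>x} \,\Big|\, X_1>x\right]
=\esp\!\left[\left(\frac{X_{j_d}}{x}\right)^{1+\delta} \Big|\, X_1>x,\ldots,X_{j_l}>x\right]\cdot\frac{\pr(X_1>x,\ldots,X_{j_l}>x)}{\pr(X_1>x)}\; ,
\end{align*}
which is the product of a bounded quantity (by (\ref{eq:moment-bound})) and a factor converging to the finite limit $\esp[\Theta_{j_2}^{\alpha}\wedge\cdots\wedge\Theta_{j_l}^{\alpha}\wedge 1]$. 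Uniform $L^{1+\delta}$-boundedness gives uniform integrability, hence the numerator converges as computed, and dividing by the denominator limit yields the claim (understanding that the statement is a limit as $x\to\infty$ and that the $\Theta_{j_1}$ appearing in the denominator refers to the constant $1$).
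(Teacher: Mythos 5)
Your proposal is correct and follows essentially the same route as the paper: both pass to the limit via the quasi-spectral convergence of Proposition \ref{prop:quasi-spectral}, compute the limiting numerator by integrating $V_1$ against its Pareto density over $\{V_1>1/(\Theta_{j_2}\wedge\cdots\wedge\Theta_{j_l}\wedge 1)\}$ to get $\frac{\alpha}{\alpha-1}\esp[\Theta_{j_d}(\Theta_{j_2}\wedge\cdots\wedge\Theta_{j_l}\wedge 1)^{\alpha-1}]$, and use the $(1+\delta)$-moment bound (\ref{eq:moment-bound}) to justify exchanging limit and expectation. The only cosmetic difference is that you invoke uniform integrability via $L^{1+\delta}$-boundedness where the paper performs the equivalent explicit truncation at a level $xA$ and lets $A\to\infty$.
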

\begin{proof}
We note first that (\ref{eq:moment-bound}) implies that $\alpha>1$.
Let $A\subseteq (0,\infty)$. Proposition~\ref{prop:quasi-spectral} implies that as $x\to\infty$
\begin{align*}
\esp\left[\frac{X_{j_d}}{x}\1{\{X_{j_d}\leq xA\}} \mid X_1>x, X_{j_{2}}>x,\ldots,X_{j_{l}}>x\right]\to\frac{\esp[V_{j_d}\1{\{1<V_{j_d}\leq A\}} \1{\{ V_1>1,V_{j_2}>1,\ldots, V_{j_l}>1\}}]}{\esp[\Theta_{j_1}^\alpha\wedge \ldots \wedge \Theta_{j_l}^\alpha\wedge 1]}\; .
\end{align*}
A computation similar to Corollary \ref{cor:lim-cond-prob} yields that
the numerator in the last expression is
\begin{align*}
\frac{\alpha}{\alpha-1}\esp\left[\Theta_{j_d} \left(\Theta_{j_2}\wedge \cdots \wedge \Theta_{j_l}\wedge 1\right)^{\alpha-1} \right]\; .
\end{align*}
Furthermore, (\ref{eq:moment-bound}) implies
\begin{align*}
&\lim_{A\to\infty}\limsup_{x\to\infty}\esp\left[\frac{X_{j_d}}{x}\1{\{X_{j_d}> xA\}} \mid X_1>x, X_{j_{2}}>x,\ldots,X_{j_{l}}>x\right]\\
&\leq \lim_{A\to\infty}
A^{-\delta}\limsup_{x\to\infty}\esp\left[\left(\frac{X_{j_d}}{x}\right)^{1+\delta} \mid X_1>x, X_{j_{2}}>x,\ldots,X_{j_{l}}>x\right]=0\; .
\end{align*}
\end{proof}
In particular, if $\alpha>1$ then setting again $(X_1,X_2)=(X,Y)$,
\begin{align}\label{eq:cte}
\lim_{x\to\infty}\esp\left[\frac{Y}{x}\mid X>x\right]=\frac{\alpha}{\alpha-1}\esp\left[\Theta_2\right]
=\frac{\alpha}{\alpha-1}\lim_{x\to\infty}\esp\left[\frac{Y}{X}\mid X>x\right]=:\constant_{\rm CTE}\;
\end{align}
and the limit is strictly positive in case of extremal dependence, that is when the limiting exponent measure $\numult{X}$ in (\ref{eq:def-regvar-mult}) is not concentrated on the axes.

\section{Weak convergence of tail empirical process}\label{sec:weakconv}
For clarity of notation we consider the case $d=2$ and a vector $(X_1,X_2)$ is written as $(X,Y)$. Recall that all random variables are non-negative with the distribution function $F$ and regularly varying with the same index $-\alpha$. Assume that we have an i.i.d. sample $(X_{j},Y_{j})$, $j=1,\ldots,n$, from the distribution of $(X,Y)$. Let $\psi:\Rset^2\to \Rset_+$.
In what follows $u_n$ denotes a scaling sequence, that is the sequence such that $u_n\to\infty$ and $n\tail{F}(u_n)\to \infty$. For $s_0>0$, define the tail empirical function
\begin{align}\label{eq:ted}
\tilde T_n(s;\psi,C)= \frac{1}{n\tail{F}_X(u_n)}\sum_{j=1}^n \psi\left(\frac{X_{j}}{u_n},\frac{Y_{j}}{u_n}\right) \1{\{(X_{j},Y_{j})\in su_n C\}}\; ,\qquad s\geq s_0\; ,
\end{align}
and $T_n(s;\psi,C)=\esp[\tilde T_n(s;\psi,C)]$. If $\psi$ is homogeneous with index $\gamma$ then Lemma \ref{lem:cond-exp} implies
\begin{align}\label{eq:homogeneity-scaling}
T(s;C,\psi)\equiv \lim_{n\to\infty}T_n(s;C,\psi)=s^{\gamma-\alpha}\int_C \psi(v_1,v_2)\numult{}(\rmd v_1,\rmd v_2)\; ,
\end{align}
whenever $\psi$ satisfies the appropriate integrability condition (see (\ref{eq:condition-on-psi}) below).

Consider the tail empirical process
\begin{align}\label{eq:tep}
\process[G]_n(s;\psi,C)=\sqrt{n\tail{F}(u_n)}\left\{\tilde T_n(s;\psi,C)-T_n(s;\psi,C)\right\}\;.
\end{align}
Also, define $\process[G]_n^*(\cdot)$ to be the process $\process[G]_n(\cdot;\psi,C)$ for the function $\psi\equiv 1$ and the set
$C=\{(x_1,x_2):x_1>1\}$.

The main result of this section is the following weak convergence for the tail empirical function. A proof is given in Section \ref{sec:technical-details}.
\begin{theorem}\label{theo:weakconv}
Let $s_0>0$.
Assume that $(X_{j},Y_{j})$ are i.i.d. regularly varying random vectors with non-negative regularly varying components with index $-\alpha$. If moreover
\begin{enumerate}
\item $u_n\to\infty$ and $n\tail{F}(u_n)\to \infty$;
\item The function $\psi$ is homogenous with order $\gamma\in \Rset$;
\item For $0<s_0\leq s\leq t$ we have $tC\subseteq sC$;
\item There exists $\delta>0$ such that $\int_C \psi^{2+\delta}(v_1,v_2)\numult{}(\rmd v_1,\rmd v_2)<\infty$;
\end{enumerate}
then
\begin{align}\label{eq:tep-weakconv}
(\process[G]_n^*(\cdot),\process[G]_n(\cdot;\psi,C))\convweak (\process[G]^*(\cdot),\process[G](\cdot;\psi,C))
\end{align}
in $\spaceD([s_0,\infty))\times \spaceD( [s_0,\infty))$, where $\process[G]^*(\cdot)$, $\process[G](\cdot;\psi,C)$ are Gaussian processes with the covariance functions
\begin{align*}
\cov(\process[G]^*(s),\process[G]^*(t))=(s\vee t)^{-\alpha}\;,
\end{align*}
\begin{align*}
\cov (\process[G](s;\psi,C),\process[G](t;\psi,C))=(s\vee t)^{2\gamma-\alpha}\int_C\psi^{2}(v_1,v_2)\numult{}(\rmd v_1,\rmd v_2) \; .
\end{align*}
\end{theorem}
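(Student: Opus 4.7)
The overall plan is to establish the joint weak convergence in $\spaceD([s_0,\infty)) \times \spaceD([s_0,\infty))$ by the standard route: first prove convergence of the finite-dimensional distributions, then establish tightness of each marginal component separately, from which joint tightness follows automatically in the product topology.

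For the finite-dimensional convergence I would fix levels $s_0 \le s_1 < \dots < s_k$ and apply a multivariate Lyapunov central limit theorem to the triangular array of $2k$-dimensional random vectors whose coordinates are the centred summands of $\process[G]_n^*(s_i)$ and $\process[G]_n(s_i;\psi,C)$, $i = 1,\dots,k$. The covariance structure follows from regular variation combined with the homogeneity of $\psi$: by the change of variable $v\mapsto sv$ and \eqref{eq:homogeneity-scaling},
\[
\frac{1}{\tail F(u_n)} \esp\bigl[\psi^2(X/u_n, Y/u_n)\, \1{(X,Y)\in su_nC}\bigr] \longrightarrow s^{2\gamma-\alpha} \int_C \psi^2(v_1,v_2)\,\numult{}(\rmd v_1,\rmd v_2),
\]
while for $s\le t$ the nesting condition $tC\subseteq sC$ collapses the product of indicators onto the level $t$, yielding the asserted covariance $(s\vee t)^{2\gamma-\alpha}\int_C\psi^2\,\rmd\numult{}$. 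The covariance of $\process[G]_n^*$ and the cross-covariance with $\process[G]_n(\cdot;\psi,C)$ are computed identically, with $\psi\equiv 1$ producing the normalising factor $\numult{}(\{x_1>1\})=1$. Lyapunov's condition with exponent $2+\delta$ follows directly from assumption~4, which furnishes the required uniform integrability after rescaling by $n\tail F(u_n)$.

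Tightness in $\spaceD([s_0,\infty))$ is the technical heart of the argument. The nesting $tC\subseteq sC$ for $s\le t$ combined with $\psi\ge 0$ implies that $s\mapsto \tilde T_n(s;\psi,C)$ is almost surely non-increasing, so its increments on a subinterval $[s,t]$ are non-negative sums of i.i.d.\ contributions. I would bound the second moment of the centred increment $\process[G]_n(s;\psi,C)-\process[G]_n(t;\psi,C)$ by
\[
\frac{1}{\tail F(u_n)} \esp\bigl[\psi^2(X/u_n,Y/u_n)\1{(X,Y)\in (su_nC)\setminus(tu_nC)}\bigr],
\]
which, by the regular-variation computation above, converges to $(s^{2\gamma-\alpha}-t^{2\gamma-\alpha})\int_C\psi^2\,\rmd\numult{}$. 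This Lipschitz-type control of the increment moments, together with monotonicity, yields tightness via a Chentsov--Billingsley criterion. Joint tightness of the pair then follows from marginal tightness.

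The principal obstacle I expect is making the second-moment bound on increments uniform in $n$: the summands depend on $n$ through $u_n$, so one must argue that the $(2+\delta)$-moment condition on $\psi$ with respect to $\numult{}$ transfers to a uniform bound on $\esp[\psi^{2+\delta}(X/u_n,Y/u_n)\1{(X,Y)\in u_n C}]/\tail F(u_n)$. This typically requires combining the vague convergence in~\eqref{eq:def-regvar-mult} with a Potter-type uniform bound so that the limiting integrability estimate is inherited at the prelimit level. Once this uniformity is in place, combining the Lyapunov theorem for the fidi's with the Chentsov bound for tightness yields the claimed joint weak convergence.
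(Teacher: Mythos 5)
Your finite--dimensional argument is essentially the paper's: a Lindeberg/Lyapunov central limit theorem for the triangular array with the $(2+\delta)$-moment assumption supplying the Lyapunov condition, the Cram\'er--Wold device for joint convergence, and the covariances computed from Lemma~\ref{lem:cond-exp} together with the nesting $tC\subseteq sC$, exactly as you describe. That part is sound.

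The gap is in the tightness step. A single-increment bound $\esp\bigl[|\process[G]_n(t;\psi,C)-\process[G]_n(s;\psi,C)|^2\bigr]\le K|t-s|$ is of order $|t-s|^{1}$ only, and no Chentsov--Billingsley moment criterion accepts that: you need either $|t-s|^{1+\epsilon}$ for a single increment (which for this sum of i.i.d.\ terms would force a fourth moment of $\psi$, not assumed), or a bound of order $|s_2-s_1|^{2}$ on the joint quantity $\esp\bigl[|\process[G]_n(s_1)-\process[G]_n(t)|^2\,|\process[G]_n(t)-\process[G]_n(s_2)|^2\bigr]$ for $s_1<t<s_2$ (Billingsley, Theorem 13.5). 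Monotonicity does not rescue the single-increment route: $\tilde T_n(\cdot;\psi,C)$ and its mean $T_n(\cdot;\psi,C)$ are each non-increasing, but $\process[G]_n$ is their centred difference, and the oscillation bound that monotonicity yields on $[s,t]$ carries the deterministic term $\sqrt{n\tail{F}(u_n)}\,|T_n(s;\psi,C)-T_n(t;\psi,C)|$, which diverges for fixed $s<t$. The ingredient you are missing is the one the paper's proof turns on: with $U_{n,j}(s,t)=\psi(X_j/u_n,Y_j/u_n)\1{\{(X_j,Y_j)\in (su_nC)\setminus(tu_nC)\}}$, the nesting condition makes the supports of $U_{n,j}(s_1,t)$ and $U_{n,j}(t,s_2)$ disjoint, so $U_{n,j}(s_1,t)\,U_{n,j}(t,s_2)=0$. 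Expanding the fourth-moment product above, the off-diagonal ($i\ne j$) terms then factor into a product of two increment variances, each controlled by $g_n(s_1,s_2;2)$ where $g_n(s;2)=\tail{F}(u_n)^{-1}\esp[U_{n,j}^2(s)]\to s^{2\gamma-\alpha}\int_C\psi^2(v_1,v_2)\numult{}(\rmd v_1,\rmd v_2)$ uniformly (this uniform convergence, from Lemma~\ref{lem:cond-exp}, also settles the uniformity-in-$n$ issue you raise, without Potter bounds), giving the required $O(|s_2-s_1|^2)$; the diagonal terms, which would otherwise contribute only $O(|s_2-s_1|)$, collapse by the zero-product identity and leave a remainder carrying an extra factor $\tail{F}(u_n)\to 0$. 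Only second moments of $\psi$ enter the tightness bound; the $(2+\delta)$ assumption is used solely in the Lindeberg step. Without the adjacent-increment product argument (or some substitute of comparable strength), your tightness claim does not close.
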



\subsection{Tail empirical process with random levels}
To apply the weak convergence established in Theorem~\ref{theo:weakconv} one needs to choose $u_n$. The sequence depends on the marginal distribution which is unknown. Hence, we consider the tail empirical process with random levels. We refer the reader to \cite{rootzen:2009} and \cite{kulik:soulier:2011}.

The second issue is that the centering in the tail empirical process (\ref{eq:tep}) is $T_n(s;\psi,C)$ not its limit $T(s;\psi,C)$. This will be handled by an appropriate "no-bias" condition.

To proceed, choose a sequence $k=k_n$ such that $k\to\infty$ and $k/n\to 0$ and define $u_n$ by $k=n\tail{F}(u_n)$.
Let $X_{n:1}\leq X_{n:2}\leq \cdots \leq X_{n:n}$ be order statistics from $X_{j}$, $j=1,\ldots,n$. First, from Theorem \ref{theo:weakconv} we conclude the following weak convergence. Let $T_n(s)=\tail{F}(su_n)/\tail{F}(u_n)$.
\begin{corollary}\label{cor:order-stats}
Assume that the conditions of Theorem~\ref{theo:weakconv} are satisfied. Furthermore, assume that
  the distribution function $F$ is continuous
  and that
  \begin{align}
    \label{eq:tep-derivee-uniform-convergence}
    \lim_{n\to\infty} \TEDlimit_n'(s) = -\alpha s^{-\alpha-1} \; ,
  \end{align}
  uniformly in a neighborhood of~1.
Then
\begin{align*}
\left(\sqrt{k}\left\{\frac{X_{n:n-k}}{u_n}-1\right\},\process[G]_n(\cdot;\psi,C)\right)\convweak (\alpha^{-1}\process[G]^*(1),\process[G](\cdot;\psi,C))\; .
\end{align*}
\end{corollary}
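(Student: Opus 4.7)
The plan is to reduce the statement to Theorem~\ref{theo:weakconv} via the standard trick that identifies $X_{n:n-k}/u_n$ as (essentially) the generalized inverse of the empirical tail function of the first coordinate, and then to apply a Vervaat-type linearization around the point $s=1$. Write $W_n = X_{n:n-k}/u_n$ and let $\tilde T_n^*(s)$ denote $\tilde T_n(s;\psi,C)$ for the choice $\psi\equiv 1$, $C=\{x_1>1\}$, so that $\tilde T_n^*(s)=k^{-1}\sum_{j=1}^n\1{\{X_j>su_n\}}$ is non-increasing and right-continuous. By definition of the order statistics, $\tilde T_n^*(W_n)=1+O(1/k)$, and the defining identity of the tail empirical process gives
\begin{align*}
T_n(W_n)-T_n(1)=-k^{-1/2}\process[G]_n^*(W_n)+O(1/k).
\end{align*}

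Next I would use a consistency step, $W_n\convprob 1$, which follows from Theorem~\ref{theo:weakconv} applied to $\process[G]_n^*$ (its tightness forces $\tilde T_n^*\to T$ uniformly on compacts of $[s_0,\infty)$, and the limit $T(s)=s^{-\alpha}$ is strictly decreasing and continuous, so its inverse at $1$ is $1$). Given consistency, the derivative assumption~(\ref{eq:tep-derivee-uniform-convergence}) combined with the mean value theorem yields
\begin{align*}
T_n(W_n)-T_n(1)=T_n'(\xi_n)(W_n-1)=-\alpha(W_n-1)(1+o_P(1)),
\end{align*}
where $\xi_n$ lies between $1$ and $W_n$. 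Meanwhile, the weak convergence of $\process[G]_n^*$ in $\spaceD([s_0,\infty))$ to the continuous process $\process[G]^*$ implies stochastic equicontinuity, hence $\process[G]_n^*(W_n)=\process[G]_n^*(1)+o_P(1)$.

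Plugging back, I obtain the linearization
\begin{align*}
\sqrt{k}(W_n-1)=\alpha^{-1}\process[G]_n^*(1)+o_P(1).
\end{align*}
The joint weak convergence assertion of the corollary then follows immediately: Theorem~\ref{theo:weakconv} already provides the joint convergence of $(\process[G]_n^*(\cdot),\process[G]_n(\cdot;\psi,C))$ to $(\process[G]^*(\cdot),\process[G](\cdot;\psi,C))$ in the appropriate product Skorokhod space, so the continuous mapping theorem applied to the projection $s=1$ in the first coordinate, combined with the $o_P(1)$ approximation above, delivers the desired joint convergence.

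The main obstacle I expect is the rigorous control of the linearization step, specifically showing that the $o_P$ error from replacing $T_n'(\xi_n)$ by $-\alpha$ and from replacing $\process[G]_n^*(W_n)$ by $\process[G]_n^*(1)$ is genuinely negligible. This requires uniform control of $T_n'$ in a shrinking neighborhood of $1$ (supplied by the locally uniform convergence assumption~(\ref{eq:tep-derivee-uniform-convergence})) together with the stochastic equicontinuity of $\process[G]_n^*$ near $1$ (supplied by tightness in $\spaceD([s_0,\infty))$ and continuity of the Gaussian limit). Once these two ingredients are combined with the elementary bound $\tilde T_n^*(W_n)=1+O(1/k)$, the rest is bookkeeping.
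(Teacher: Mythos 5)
Your proof is correct, and it reaches the same linearization identity $\sqrt{k}(X_{n:n-k}/u_n-1)=\alpha^{-1}\process[G]_n^*(1)+o_P(1)$ that drives the paper's argument; the difference is purely in the technical packaging. The paper first passes to an almost-sure version of Theorem~\ref{theo:weakconv} via the Skorokhod representation theorem, then manipulates the generalized inverses $T_n^{\leftarrow}$ and $(\tilde T_n)^{\leftarrow}$ using Whitt's composition theorem and Vervaat's lemma, and finally Taylor-expands $T_n$ around~$1$; joint convergence with $\process[G]_n(\cdot;\psi,C)$ is then automatic because everything converges almost surely on the common representation space. You instead stay in the original probability space and argue directly in probability: the exact identity $\tilde T_n^*(X_{n:n-k}/u_n)=1$ (no ties, since $F$ is continuous) plus the definition of $\process[G]_n^*$ gives $T_n(W_n)-T_n(1)=-k^{-1/2}\process[G]_n^*(W_n)$, the mean value theorem with the locally uniform derivative condition~(\ref{eq:tep-derivee-uniform-convergence}) replaces Vervaat's lemma, and $C$-tightness of $\process[G]_n^*$ (continuous Gaussian limit) replaces the Skorokhod/Whitt step in justifying $\process[G]_n^*(W_n)=\process[G]_n^*(1)+o_P(1)$. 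Your route is somewhat more elementary and self-contained, at the cost of having to verify stochastic equicontinuity explicitly; the paper's route automates that verification but imports two external results. Both correctly deliver the joint convergence, since in either case the order-statistic term is reduced to a continuous functional of $\process[G]_n^*$ plus $o_P(1)$, and Theorem~\ref{theo:weakconv} already gives the pair $(\process[G]_n^*,\process[G]_n(\cdot;\psi,C))$ jointly.
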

We note that the normal convergence of the order statistics is standard (see e.g. \cite[Theorem 2.4.1]{dehaan:ferreira:2006}), but we need to argue that the convergence holds jointly.

Furthermore, we impose the following no-bias condition:
\begin{align}\label{eq:bias}
\lim_{n\to\infty}\sup_{s>s_0}\sqrt{k}|T_n(s;\psi,C)-T(s;\psi,C)|=0\;.
\end{align}
This leads to the following empirical processes
\begin{align*}
\hat{\process[G]}_n(s;\psi,C)=\sqrt{k}\left\{\hat T_n(s;\psi,C) - T_n(s;\psi,C)\right\}\;
\end{align*}
\begin{align*}
\hat{\hat{\process[G]}}_n(s;\psi,C)=\sqrt{k}\left\{\hat{\hat T}_n(s;\psi,C) - T_n(s;\psi,C)\right\}\; ,
\end{align*}
where
\begin{align}\label{eq:ted-random}
  \hat T_n(s;\psi,C) & =  \frac{1}{k}\sum_{j=1}^n \psi
    \left( \frac{X_{j}}{\tepseq},\frac{Y_{j}}{\tepseq}\right) \1{\{(X_{j},Y_{j})\in s X_{n:n-k} C\}} \;
\end{align}
and
\begin{align*}
  \hat{\hat T}_n(s;\psi,C) & =  \frac{1}{k}\sum_{j=1}^n \psi
    \left( \frac{X_{j}}{X_{n:n-k}},\frac{Y_{j}}{X_{n:n-k}}\right) \1{\{(X_{j},Y_{j})\in s X_{n:n-k} C\}} \; .
\end{align*}
\begin{theorem}
  \label{theo:fclt-random-ext}
  Assume that the conditions of Theorem~\ref{theo:weakconv} are satisfied. Furthermore,
  the distribution function $F$ is continuous and
  (\ref{eq:tep-derivee-uniform-convergence}) holds.
   Then
  \begin{align}
  \label{eq:fclt-random-ext}
  \hat{\process[G]}_n(s;\psi,C) \convweak \process[G](s;\psi,C)  +s^{\gamma-\alpha}\frac{1}{\alpha}T'(1;\psi,C)\process[G]^*(1)
  \; ,
  \end{align}
and
\begin{align}\label{eq:fclt-random-random-ext}
\hat{\hat{\process[G]}}_n(s;\psi,C) \convweak \process[G](s;\psi,C)
+\frac{1}{\alpha}s^{\gamma-\alpha}T'(1;\psi,C)\process[G]^*(1)-\frac{\gamma}{\alpha}s^{\gamma-\alpha}T(1;\psi,C)\process[G]^*(1)\;.
\end{align}
in $\spaceD([s_0,\infty))$. If moreover (\ref{eq:bias}) is satisfied, then the centering $T_n(s;\psi,C)$ can be replaced with its limit $T(s;\psi,C)$.
\end{theorem}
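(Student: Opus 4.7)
The plan is to reduce the random-level statement to the deterministic-level result of Theorem~\ref{theo:weakconv} by viewing $\hat T_n(s;\psi,C)$ as a time-changed version of $\tilde T_n(\cdot;\psi,C)$. Setting $R_n=X_{n:n-k}/u_n$, the indicator in (\ref{eq:ted-random}) depends on $s$ only through $sX_{n:n-k}=(sR_n)u_n$, so a direct inspection of the definitions gives $\hat T_n(s;\psi,C)=\tilde T_n(sR_n;\psi,C)$ and hence
\begin{align*}
\hat{\process[G]}_n(s;\psi,C)=\process[G]_n(sR_n;\psi,C)+\sqrt{k}\{T_n(sR_n;\psi,C)-T_n(s;\psi,C)\}\;.
\end{align*}
By Corollary~\ref{cor:order-stats} the triple $(R_n,\process[G]_n^*(\cdot),\process[G]_n(\cdot;\psi,C))$ converges jointly with $\sqrt{k}(R_n-1)\convweak\alpha^{-1}\process[G]^*(1)$, and Skorokhod representation allows us to assume $R_n\to1$ almost surely throughout.

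For the first summand, continuity of the Gaussian limit $\process[G](\cdot;\psi,C)$ combined with $sR_n\to s$ and tightness of $\process[G]_n$ in $\spaceD([s_0,\infty))$ yields $\process[G]_n(sR_n;\psi,C)\to\process[G](s;\psi,C)$ in distribution, jointly with $\process[G]^*(1)$. For the second summand, a Taylor expansion gives $T_n(sR_n;\psi,C)-T_n(s;\psi,C)=s(R_n-1)T_n'(\xi_n;\psi,C)$ for some $\xi_n$ between $s$ and $sR_n$; provided $T_n'(\cdot;\psi,C)\to T'(\cdot;\psi,C)$ uniformly in a neighborhood of $s$, multiplying by $\sqrt k$ delivers $sT'(s;\psi,C)\cdot\alpha^{-1}\process[G]^*(1)$. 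Using the explicit form $T(s;\psi,C)=s^{\gamma-\alpha}T(1;\psi,C)$ from (\ref{eq:homogeneity-scaling}) one simplifies $sT'(s;\psi,C)=(\gamma-\alpha)s^{\gamma-\alpha}T(1;\psi,C)=s^{\gamma-\alpha}T'(1;\psi,C)$, giving exactly the second term in (\ref{eq:fclt-random-ext}).

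For (\ref{eq:fclt-random-random-ext}) the additional rescaling inside $\psi$ is absorbed by its homogeneity of order $\gamma$: $\psi(X_j/X_{n:n-k},Y_j/X_{n:n-k})=R_n^{-\gamma}\psi(X_j/u_n,Y_j/u_n)$, so $\hat{\hat T}_n(s;\psi,C)=R_n^{-\gamma}\hat T_n(s;\psi,C)$. The decomposition
\begin{align*}
\hat{\hat{\process[G]}}_n(s;\psi,C)=R_n^{-\gamma}\hat{\process[G]}_n(s;\psi,C)+\sqrt{k}(R_n^{-\gamma}-1)T_n(s;\psi,C)
\end{align*}
then combines (\ref{eq:fclt-random-ext}) in the first summand (using $R_n^{-\gamma}\to1$) with the delta-method limit $\sqrt{k}(R_n^{-\gamma}-1)\convweak-\gamma\alpha^{-1}\process[G]^*(1)$ and the pointwise limit $T_n(s;\psi,C)\to s^{\gamma-\alpha}T(1;\psi,C)$, producing the additional term $-(\gamma/\alpha)s^{\gamma-\alpha}T(1;\psi,C)\process[G]^*(1)$. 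The closing remark on replacing $T_n$ by $T$ is immediate since (\ref{eq:bias}) forces $\sqrt{k}\sup_{s\ge s_0}|T_n(s;\psi,C)-T(s;\psi,C)|\to0$.

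The main obstacle is the $\psi$-weighted uniform convergence $T_n'(\cdot;\psi,C)\to T'(\cdot;\psi,C)$ in a neighborhood of each $s\in[s_0,\infty)$, which is the natural analogue of hypothesis (\ref{eq:tep-derivee-uniform-convergence}) (stated there only for the marginal tail with $\psi\equiv1$). This must be teased out of the joint vague convergence and the regular variation of $\tail{F}$, and it is really the mechanism by which the random-threshold fluctuation $\sqrt k(R_n-1)$ feeds into the limit through $T'(1;\psi,C)$. A secondary nuisance is justifying the composition $s\mapsto sR_n$ uniformly on the unbounded domain $[s_0,\infty)$, which rests on the $\spaceD([s_0,\infty))$-tightness of $\process[G]_n$ furnished by the $(2+\delta)$-moment condition on $\psi$.
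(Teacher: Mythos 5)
Your proof follows essentially the same route as the paper: the time change $\hat T_n(s;\psi,C)=\tilde T_n(sR_n;\psi,C)$ with $R_n=X_{n:n-k}/\tepseq$, joint convergence of $(R_n,\process[G]_n)$ from Corollary~\ref{cor:order-stats} plus Skorokhod representation, a delta-method treatment of the drift term, and absorption of the rescaling inside $\psi$ via homogeneity for $\hat{\hat{\process[G]}}_n$ (your two-term decomposition $R_n^{-\gamma}\hat{\process[G]}_n(s)+\sqrt{k}(R_n^{-\gamma}-1)T_n(s)$ is a cleaner repackaging of the paper's four-term one). The single point of divergence is the drift term: you Taylor-expand $T_n(\cdot;\psi,C)$ itself and therefore need the $\psi$-weighted uniform derivative convergence $T_n'(\cdot;\psi,C)\to T'(\cdot;\psi,C)$, which you correctly flag as not among the hypotheses. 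The paper avoids this by writing the drift as $s^{\gamma-\alpha}\sqrt{k}\{T(R_n;\psi,C)-T(1;\psi,C)\}$, i.e.\ with the exactly homogeneous limit $T$ in place of $T_n$, so the delta method is applied to the smooth map $r\mapsto r^{\gamma-\alpha}T(1;\psi,C)$ and no derivative hypothesis on $T_n(\cdot;\psi,C)$ is needed. The price of that substitution is the discrepancy $\sqrt{k}\{(T_n-T)(sR_n)-(T_n-T)(s)\}$, which is controlled only under the no-bias condition (\ref{eq:bias}) (or some $o(k^{-1/2})$ uniform rate for $T_n-T$); the paper glosses over this, so the obstacle you identify is genuinely present in the statement as written for the centering $T_n$, just relocated. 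In short: your argument is sound modulo the same unstated regularity that the paper also needs, and if you replace your Taylor expansion of $T_n$ by the paper's expansion of $T$ your ``main obstacle'' disappears, at the cost of invoking (\ref{eq:bias}) one step earlier.
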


\section{Conditional tail distribution}\label{sec:conddistr-1}
If we choose $\psi\equiv 1$ and $C=\{(x_1,x_2):x_1>1,x_2>y\}$, $y>0$, then $\tilde T_n(s;\psi,C)$ in (\ref{eq:ted}) becomes
\begin{align}\label{eq:tef-1}
\tilde T_n^{(1)}(s;y)= \frac{1}{n\tail{F}(u_n)}\sum_{j=1}^n \1{\{X_{j}>su_n,Y_{j}>su_n y\}}\; .
\end{align}
Furthermore,
\begin{align*}
T_n^{(1)}(s;y)=\frac{\pr(X>su_n,Y> su_n y)}{\pr(X>u_n)}\; , \; ,
T^{(1)}(s;y)=s^{-\alpha}\int_{(1,\infty]\times(y, \infty]} \numult{}(\rmd v_1,\rmd v_2)\; .
\end{align*}
Hence, $T^{(1)}(1;y)=\lim_{n\to\infty}\pr(Y>u_n y\mid X>u_n)$ is the limiting conditional tail distribution and $T^{(1)}(1;1)$ is the tail dependence coefficient.
We note that in terms of the quasi-spectral representation the limiting variance is
\begin{align}\label{eq:limiting-variance-lcp}
s^{-\alpha}\esp\left[\left(\frac{\Theta_2}{y}\wedge 1\right)^{\alpha}\right]\; .
\end{align}
If we choose $\psi(x_1,x_2)=(x_2/(yx_1)\wedge 1)^{\alpha}$ and $C=\{(x_1,x_2):x_1>1\}$ then
\begin{align}\label{eq:tef-2}
\tilde T_n^{(2)}(s;y)= \frac{1}{n\tail{F}(u_n)}\sum_{j=1}^n \left(\frac{Y_{j}}{yX_{j}}\wedge 1\right)^\alpha\1{\{X_{j}>su_n\}}\;,
\end{align}
\begin{align*}
T_n^{(2)}(s;y)=\frac{1}{\tail{F}(u_n)}\esp\left[\left(\frac{Y}{yX}\wedge 1\right)^\alpha\1{\{X>su_n\}}\right]\; ,\qquad
T^{(2)}(s;y)=\lim_{n\to\infty}T_n^{(2)}(s;y)\; .
\end{align*}
In particular, using (\ref{eq:lcp-quasisp-d=2}),
\begin{align*}
T^{(2)}(1;y)=\esp\left[\left(\frac{\Theta_2}{y}\wedge 1\right)^\alpha\right]=\lim_{n\to\infty}\pr(Y>u_n y\mid X>u_n)\; .
\end{align*}
Theorem \ref{theo:weakconv} implies that $\sqrt{n \tail{F}(u_n)}\left\{\tilde T^{(2)}(s;y)-T_n^{(2)}(s;y)\right\}$ converges to a Gaussian process $\process[G]^{(2)}(s;y)$ with the limiting variance
\begin{align}\label{eq:limiting-variance-lcp-quasispectral}
s^{-\alpha}\esp\left[\left(\frac{\Theta_2}{y}\wedge 1\right)^{2\alpha}\right]\; .
\end{align}
which is smaller than the one given in (\ref{eq:limiting-variance-lcp}) whenever $y\geq 1$.

Hence, both tail empirical functions in (\ref{eq:tef-1}) and (\ref{eq:tef-2}) can be used to construct estimators of the  limiting conditional tail distribution. Specifically, we can use
\begin{align}\label{eq:tef-1-estim}
\hat T_n^{(1)}(1;y)= \frac{1}{k}\sum_{j=1}^n \1{\{Y_j>yX_{n:n-k},X_{j}>X_{n:n-k}\}}\;,
\end{align}
\begin{align}\label{eq:tef-2-estim}
\hat T_n^{(2)}(1;y)= \frac{1}{k}\sum_{j=1}^n \left(\frac{Y_{j}}{yX_{j}}\wedge 1\right)^\alpha\1{\{X_{j}>X_{n:n-k}\}}\;,
\end{align}
the latter one when $\alpha$ is known. The above discussion indicates that the second estimator can be asymptotically more efficient than the first one.
\subsection{Unknown $\alpha$}\label{sec:lcp-alpha}
Let $\hat\alpha$ be an estimator of $\alpha$. We redefine $\hat T_n^{(2)}(1;y)$ from (\ref{eq:tef-2-estim}) as
\begin{align}\label{eq:tef-2-estim-estim}
\hat T_n^{(2),\hat\alpha}(1;y)= \frac{1}{k}\sum_{j=1}^n \left(\frac{Y_{j}}{yX_{j}}\wedge 1\right)^{\hat\alpha}\1{\{X_{j}>X_{n:n-k}\}}\;.
\end{align}
We have
\begin{align*}
&\sqrt{k}\left\{\frac{1}{k}\sum_{j=1}^n \left(\frac{Y_{j}}{yX_{j}}\wedge 1\right)^{\hat{\alpha}}\1{\{X_{j}>X_{n:n-k}s\}}-T^{(2)}(1;y)\right\}\\
&=\sqrt{k}\left\{\frac{1}{k}\sum_{j=1}^n \left(\frac{Y_{j}}{yX_{j}}\wedge 1\right)^{\hat{\alpha}}\1{\{X_{j}>X_{n:n-k}\}}-\frac{1}{k}\sum_{j=1}^n \left(\frac{Y_{j}}{yX_{j}}\wedge 1\right)^\alpha\1{\{X_{j}>X_{n:n-k}\}}\right\}\\
&\quad +\sqrt{k}\left\{\frac{1}{k}\sum_{j=1}^n \left(\frac{Y_{j}}{yX_{j}}\wedge 1\right)^\alpha\1{\{X_{j}>X_{n:n-k}\}}-T^{(2)}(1;y)\right\}=U_1+U_2(y)\; .
\end{align*}
We already know (cf. (\ref{eq:fclt-random-ext})) that
\begin{align*}
U_2(y)
\Rightarrow \process[G]^{(2)}(1;y)+\alpha^{-1} (T^{(2)})'(1;y)\process[G]^*(1)\; .
\end{align*}
Using the first order Taylor expansion for $\alpha\rightarrow z^\alpha$, we have
\begin{align*}
U_1&\approx\sqrt{k}\left\{O_P(\hat{\alpha}-\alpha)\frac{1}{k}\sum_{j=1}^n \left(\frac{Y_{j}}{yX_{j}}\wedge 1\right)^\alpha\log\left(\frac{Y_{j}}{yX_{j}}\wedge 1\right)\1{\{X_{j}>X_{n:n-k}\}}\right\}.
\end{align*}
Let $k=o(\tilde k)$ and $\hat\alpha_{\tilde k}$ be the Hill estimator. We know that $\sqrt{\tilde k}(\hat\alpha_{\tilde k}-\alpha)$ converges to a normal random variable. Hence, in order to show that $U_1$ is of a smaller order than $U_2(y)$ it suffices to justify that
\begin{align*}
\frac{1}{k}\sum_{j=1}^n \left(\frac{Y_{j}}{yX_{j}}\wedge 1\right)^\alpha\log\left(\frac{Y_{j}}{yX_{j}}\wedge 1\right)\1{\{X_{j}>X_{n:n-k}\}}\;
\end{align*}
is bounded in probability,uniformly in $y$.
Assume that for $\delta>0$ we have
\begin{align*}
\esp\left[(\Theta_2\wedge 1)^{\alpha+\delta}|\log(\Theta_2\wedge 1)|^{1+\delta}\right]<\infty \; .
\end{align*}
Then recalling that $k=n\tail{F}(u_n)$ and $X_{n:n-k}/u_n\convprob 1$,
\begin{align*}
&\limsup_{n\to\infty}\esp\left[\frac{1}{k}\sum_{j=1}^n \left(\frac{Y_{j}}{yX_{j}}\wedge 1\right)^\alpha\left|\log\left(\frac{Y_{j}}{yX_{j}}\wedge 1\right)\right|\1{\{X_{j}>X_{n:n-k}\}}\right]
\leq
\esp\left[(\Theta_2\wedge 1)^{\alpha}|\log(\Theta_2\wedge 1)|\right]\; .
\end{align*}
Hence, $U_1$ is negligible and there is no effect of estimation of $\alpha$.

\section{Conditional Tail Expectation}\label{sec:cte}
If we choose $\psi(x_1,x_2)=x_2$ and $C=\{(x_1,x_2):x_1>1\}$ then $\tilde T_n(s;\psi,C)$ in (\ref{eq:ted}) becomes
\begin{align}\label{eq:ted-cte}
\tilde T_n^{(3)}(s)= \frac{1}{n\tail{F}(u_n)}\sum_{j=1}^n \frac{Y_{j}}{u_n}\1{\{X_{j}>su_n\}}\; ,
\end{align}
\begin{align*}
T_n^{(3)}(s)=\frac{1}{\tail{F}(u_n)}\esp\left[\frac{Y}{u_n}\1{\{X_{1}>su_n\}}\right]\; ,\;,
T^{(3)}(s)=s^{1-\alpha}\int_{(1,\infty]\times(0, \infty]} v_2\numult{}(\rmd v_1,\rmd v_2)\; .
\end{align*}
We note that the limiting variance can be represented as
\begin{align}\label{eq:limiting-variance-cte}
s^{2-\alpha}\frac{\alpha}{\alpha-2}\esp[\Theta_2^2]\; .
\end{align}
If we choose $\psi(x_1,x_2)=\frac{\alpha}{\alpha-1}\frac{x_2}{x_1}$ and $C=\{(x_1,x_2):x_1>1\}$ then
\begin{align}\label{eq:ted-cte-viaspectral}
\tilde T_n^{(4)}(s)= \frac{1}{n\tail{F}(u_n)}\frac{\alpha}{\alpha-1}\sum_{j=1}^n \frac{Y_{j}}{X_{j}}\1{\{X_{j}>su_n\}}\; ,
\end{align}
\begin{align*}
T^{(4)}(s)=\lim_{n\to\infty}\esp[\tilde T_n^{(4)}(s)]=s^{-\alpha}\frac{\alpha}{\alpha-1}\int_{(1,\infty]\times(0, \infty]} \frac{v_2}{v_1}\numult{}(\rmd v_1,\rmd v_2)\; .
\end{align*}
In particular, by (\ref{eq:cte})
\begin{align}\label{eq:cte-viaspectral}
T^{(4)}(1)=\frac{\alpha}{\alpha-1}\lim_{n\to\infty}\esp\left[\frac{Y}{X}\mid X>u_n\right]=\lim_{n\to\infty}\esp\left[\frac{Y}{u_n}\mid X>u_n\right]\; .
\end{align}
We have furthermore
\begin{align}
\var(\process[G]^{(4)}(s))&=s^{-\alpha}\left(\frac{\alpha}{\alpha-1}\right)^2\int_{(0,\infty)}v_2^2\int_1^\infty \frac{1}{v_1^2}\numult{}(\rmd v_1,\rmd v_2)\label{eq:variance-cte-viaspectral-1}\\
&\leq s^{-\alpha}\left(\frac{\alpha}{\alpha-1}\right)^2\int_{(0,\infty)}v_2^2\int_1^\infty \numult{}(\rmd v_1,\rmd v_2)\; .\label{eq:variance-cte-viaspectral-2}
\end{align}
The integral in (\ref{eq:variance-cte-viaspectral-2}) is finite whenever $\alpha>2$. However, the integral in (\ref{eq:variance-cte-viaspectral-1}) may exists even when $\alpha<2$ (take trivially the situation of $Y=X$ or $Y=\phi X+\sigma |Z|$, where $\phi>0$, $X$ is regularly varying with index $-\alpha$ and support contained in $(\epsilon,\infty)$, $\epsilon>0$, independent of a standard normal random variable $Z$.)

The limiting variance can be written as
\begin{align}\label{eq:limiting-variance-cte-quasispectral}
s^{-\alpha}\left(\frac{\alpha}{\alpha-1}\right)^2\esp[\Theta_2^2]\; .
\end{align}
We note that for $s=1$ the limiting variance in (\ref{eq:limiting-variance-cte-quasispectral}) is smaller than the one in (\ref{eq:limiting-variance-cte}). Furthermore, the effect of estimating $\alpha$ is negligible if we use an estimator of $\alpha$ with a faster rate of convergence, as described in Section \ref{sec:lcp-alpha}.

\subsection{Modelling Conditional Tail Expectation}
Let $U(t) = F^\leftarrow(1-1/t)$ be the upper quantile function.
For a small $p\in (0,1)$ we have $\pr(X>U(1/p))=p$.
Our goal is to estimate
\begin{align*}
\estimp(p)=\esp\left[Y_1\mid X_1>U(1/p)\right]
\end{align*}
when $p$ is small. In case of extremal dependence we have (cf. (\ref{eq:cte})) whenever $p\to 0$,
\begin{align}\label{eq:approximation-of-thetap}
\estimp(p)\approx \constant_{\rm CTE} U_X(1/p)\; ,
\end{align}
where $\constant_{\rm CTE}=\lim_{x\to\infty}x^{-1}\esp[Y_1\mid X_1>x]\in (0,\infty)$.
If we model the tail by a generalized extreme value distribution, then $U(1/p)$ can be estimated using the representation (5.9) in \cite{beirlantetal:2004}, while $\hat\constant_{\rm CTE}$ can be
estimated
using the tail empirical functions (\ref{eq:ted-cte}) and (\ref{eq:ted-cte-viaspectral}) as follows. We take $s^{-1}\tilde T_n^{(3)}(s)$ and $\tilde T_n^{(4)}(s)$ and then replace $s$ with $X_{n:n-k}/u_n$ to obtain
\begin{align}
&\hat\constant_{\rm CTE}^{(3)}=\hat T_n^{(3)}(1)=\frac{1}{k}\sum_{j=1}^n \frac{Y_{j}}{X_{n:n-k}}\1{\{X_{j}>X_{n:n-k}\}}\label{eq:cte-estimators-1}\\
&\hat\constant_{\rm CTE}^{(4)}=\hat T_n^{(4)}(1)=\frac{1}{k}\frac{\hat\alpha}{\hat\alpha-1}\sum_{j=1}^n \frac{Y_{j}}{X_j}\1{\{X_{j}>X_{n:n-k}\}}\; .\label{eq:cte-estimators-2}
\end{align}
Then, $\hat\constant_{\rm CTE}$ can be
chosen to be one of the estimators defined in (\ref{eq:cte-estimators-1})-(\ref{eq:cte-estimators-2}).

Let now $X$ be regularly varying.
The function $U$ is regularly varying as $p\to 0$. If $\hat F_{n,X}$ is the empirical distribution function associated with $X_1,\ldots,X_n$ and we set
$\hat U_{n,X}(t)=\hat F_{n,X}(1-1/t)$, then $\hat U_n(n/k)=X_{n:n-k}$. Thus, when $n/k \approx 1/p$,
we have the following approximation (see \cite[p. 119]{beirlantetal:2004}):
\begin{align}\label{eq:approx}
\estimp(p)\approx \constant_{\rm CTE}U_X(1/p)\approx \constant_{\rm CTE}U_X(n/k)\left(\frac{k}{np}\right)^{1/\alpha}\; .
\end{align}
Hence,
we can estimate
\begin{align*}
\hat\estimp(p) =\hat\constant_{\rm CTE}  X_{n:n-k}\left(\frac{k}{np}\right)^{\hat\alpha^{-1}}\;,
\end{align*}
where $\hat\alpha$ is an estimator of $\alpha$.

Equation (\ref{eq:cte-estimators-1})
leads
to the following estimators of $\estimp(p)$:
\begin{align}\label{eq:modeling-cte-1}
\tilde\estimp^{(3)}(p) =\frac{1}{k}\sum_{j=1}^n Y_{j}\1{\{X_{j}>X_{n:n-k}\}}  \times\left(\frac{k}{np}\right)^{\alpha^{-1}}\;, \;
\hat\estimp^{(3)}(p) =\frac{1}{k}\sum_{j=1}^n Y_{j}\1{\{X_{j}>X_{n:n-k}\}}  \times\left(\frac{k}{np}\right)^{\hat\alpha^{-1}}\; .
\end{align}
We note that
(\ref{eq:modeling-cte-1}) is precisely the estimator used in \cite{cai:einmahl:haan:zhou:2012} and our Theorem \ref{theo:fclt-random-ext} can be used to conclude their Theorem 1 under slightly different conditions. Indeed, using (\ref{eq:approx}) and noting that
$U(n/k)=u_n$ we have
\begin{align}
&\sqrt{\statinterseq}\left\{\frac{\tilde\estimp^{(3)}(p)}{\estimp(p)}-1\right\}\approx
\sqrt{\statinterseq}\frac{1}{\constant_{\rm CTE}}
\left\{\frac{\hat\constant_{\rm CTE}^{(3)} \orderstat{n:n-\statinterseq}}{U_X(n/k)}-\constant_{\rm CTE}\right\} \nonumber\\ 
&\phantom{=}+\sqrt{\statinterseq} \left\{\frac{\left(\frac{k}{np}\right)^{1/\alpha}U_X(n/k)}{U_X(1/p)}-1\right\} \frac{\hat\constant_{\rm CTE}^{(3)} \orderstat{n:n-\statinterseq}}{\constant_{\rm CTE}U_X(n/k)}\; .
\label{eq:estimp-decomposition-cte-2}
\end{align}
We can recognize
$\hat\constant_{\rm CTE}^{(3)} \orderstat{n:n-\statinterseq}/U_X(n/k)$ to be
\begin{align*}
\frac{1}{k}\sum_{j=1}^n \frac{Y_j}{u_n}\1{\{X_j>X_{n:n-k}\}}
\end{align*}
and its convergence can be concluded from (\ref{eq:fclt-random-ext}), while the bias term in (\ref{eq:estimp-decomposition-cte-2}) can be handled by imposing a second order condition as in \cite{cai:einmahl:haan:zhou:2012}.

Now, the case of estimated $\alpha$ in $\hat\estimp^{(3)}(p)$. Applying the first order Taylor expansion, we have
\begin{align*}
\hat\estimp^{(3)}(p)\approx \estimp(p)
+\hat\constant_{\rm CTE}^{(3)}\orderstat{n:n-\statinterseq}
\left(\frac{1}{\hat\alpha}-\frac{1}{\alpha}\right)\left(\frac{k}{np}\right)^{1/\alpha} \log\left(\frac{k}{np}\right)\;,
\end{align*}
so that
\begin{align*}
&\sqrt{\statinterseq}\left\{\frac{\hat \estimp^{(3)}(p)}{ \estimp(p)}-1\right\}\approx
\sqrt{\statinterseq}\left\{\frac{\tilde \estimp^{(3)}(p)}{\estimp(p)}-1\right\}\\
&+\sqrt{\delta_n}\left(\frac{1}{\hat\alpha}-\frac{1}{\alpha}\right) \frac{\sqrt{\statinterseq}}{\sqrt{\delta_n}}\log\left(\frac{k}{np}\right)\frac{\hat\constant_{\rm CTE}^{(3)}\orderstat{n:n-\statinterseq}}{\estimp(p)}\left(\frac{k}{np}\right)^{1/\alpha} \; .
\end{align*}
If for some $\delta_n\to\infty$ and a random variable $\Delta$ we have
\begin{align*}
\sqrt{\delta_n}\left(\frac{1}{\hat\alpha}-\frac{1}{\alpha}\right) \convdistr \Delta
\end{align*}
and $\lim_{n\to\infty}\frac{\sqrt{\statinterseq}}{\sqrt{\delta_n}}\log\left(\frac{k}{np}\right)=r\in [0,\infty)$, then
estimation of $\alpha$ yields an additional contribution $r\Delta$. This is exactly the situation of Theorem 1 in \cite{cai:einmahl:haan:zhou:2012}, however note that they did not require that the vector $(X,Y)$ is regularly varying. Nevertheless, their Theorem 1 can be recovered from our results.

\section{Implementation. Simulation studies}\label{sec:sim}
We perform simulation studies to illustrate our theoretical results.
We illustrate estimation of the tail dependence coefficient
\begin{align*}
\mbox{\rm TDC}:=\lim_{x\to\infty}\pr(Y> x\mid X>x)\; .
\end{align*}
We use
the estimators $\hat T_n^{(1)}(1;1)$, $\hat T_n^{(2)}(1;1)$, $\hat T_n^{(2),\hat\alpha}(1;1)$ defined in (\ref{eq:tef-1-estim}),
(\ref{eq:tef-2-estim}), (\ref{eq:tef-2-estim-estim}).
At the first step we plot estimates computed for different numbers $k$ of order statistics. Next, we conduct Monte Carlo estimation for particular choices of $k$ (5\%, 10\%, 20\%, 30\% and 40\% of observations). Number of Monte Carlo iterations is chosen to be 1000.

Our simulations indicate that the quasi-spectral method is less variable more robust (in terms of the choice of $k$) than the standard empirical method, even if the parameter $\alpha$ has to be estimated.

\subsection{A toy example: simple linear model}
We simulate 1000 observations from the model $Y= \phi X +\sigma |Z|$, where $\phi\in (0,1)$, $\sigma>0$, $X$ is standard Pareto with $\alpha>0$ and $Z$ is standard normal. In this case the tail dependence coefficient is $\phi^{\alpha}$.

Figure 1 
shows
shows the estimated values using the three estimators, computed for different values of $k$, where $k$
is the number  of order statistics being used. On the $x$-axes actual values of order statistics $X_{n:1},\ldots,X_{n:n}$ are plotted in the increasing order.
Hence, the estimators computed at the left-end of each picture use a large number of order statistics, while at the right-end use few order statistics.
This is different  as compared to the Hill plot.
The first observation (not surprisingly) is that the empirical estimator $\hat T_n^{(1)}(1;1)$ is very sensitive with respect to the number of order statistics $k$, and is completely useless when plotted against large values of order statistics. The estimators motivated by the quasi-spectral representation are more "stable", even if the parameter $\alpha$ has to be estimated.

Figures 2
and
3 show Monte Carlo estimates of {\rm TDC} using $\hat T_n^{(1)}(1;1)$, $\hat T_n^{(2)}(1;1)$ (Figure 2)
and
$\hat T_n^{(2),\hat\alpha}(1;1)$ (Figure 3),
where the estimators are computed based on $k=5\%, 10\%, 20\%, 30\%, 40\%$ upper order statistics.
The parameter $\alpha$ in $\hat T_n^{(2),\hat\alpha}(1;1)$ is estimated using the Hill estimator based on $k_{\alpha}=5\%,10\%, 20\%,40\%$ of upper order statistics.

\begin{figure}
\label{fig:1}
\begin{center}
\includegraphics[width=0.9\textwidth,height=0.4\textheight]{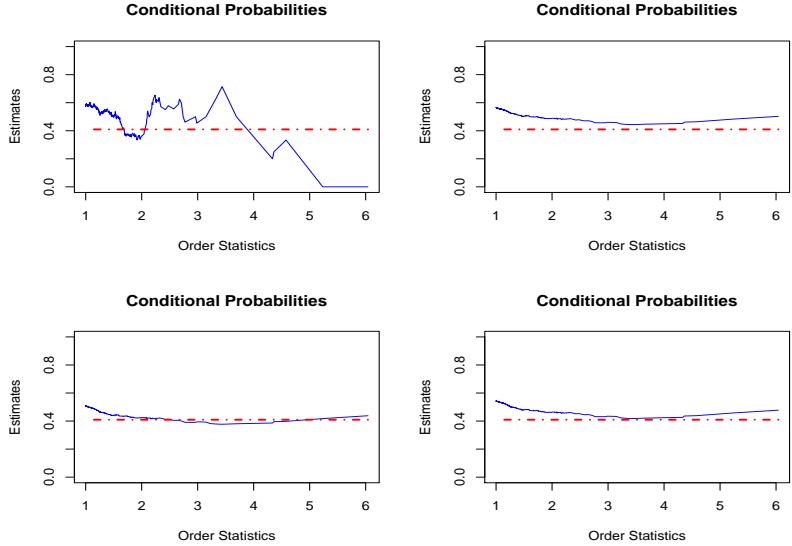}
\end{center}
\caption{\small{Estimation of {\rm TDC} for the model $Y=\phi X+\sigma |Z|$ with $\phi=0.8$, $\alpha=4$, $\sigma=0.1$. The dotted line shows the true value $\phi^{\alpha}$. Top line, left: estimator $\hat T_n^{(1)}(1;1)$; top line, right: estimator $\hat T_n^{(2)}(1;1)$; bottom line: estimators $\hat T_n^{(2),\hat\alpha}(1;1)$, where $\alpha$ is estimated using the Hill estimator based on 10\% (left picture) and 20\% (right picture) of order statistics.} }
\end{figure}

\begin{figure}
\label{fig:2a}
\begin{center}
\includegraphics[width=0.9\textwidth,height=0.35\textheight]{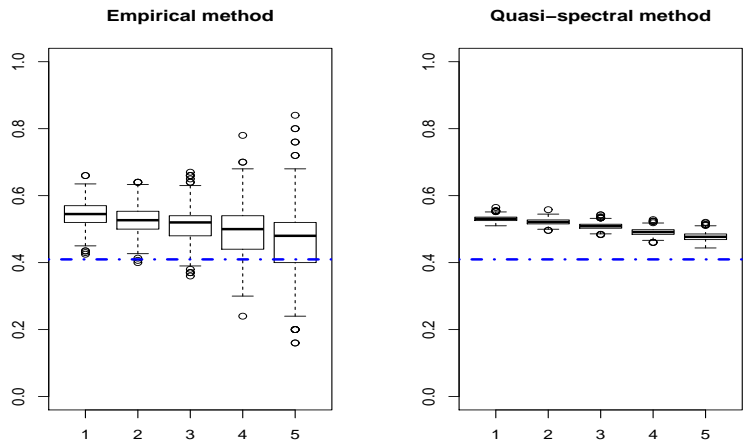}
\end{center}
\caption{\small{Estimation of {\rm TDC} for the model $Y=\phi X+\sigma |Z|$ with $\phi=0.8$, $\alpha=4$, $\sigma=0.1$. The dotted line shows the true value $\phi^{\alpha}$. Left panel: estimator $\hat T_n^{(1)}(1;1)$; right panel: estimator $\hat T_n^{(2)}(1;1)$. Each figure shows the boxplots for estimated values of the conditional probability computed for five different values of $k$. The first boxplot is computed based on $40\%$ of observations, the second one based on $30\%$ of observations, and the remaining ones based
on $20\%$, $10\%$ and $5\%$.  }}
\end{figure}

\begin{figure}
\label{fig:2b}
\begin{center}
\includegraphics[width=0.9\textwidth,height=0.45\textheight]{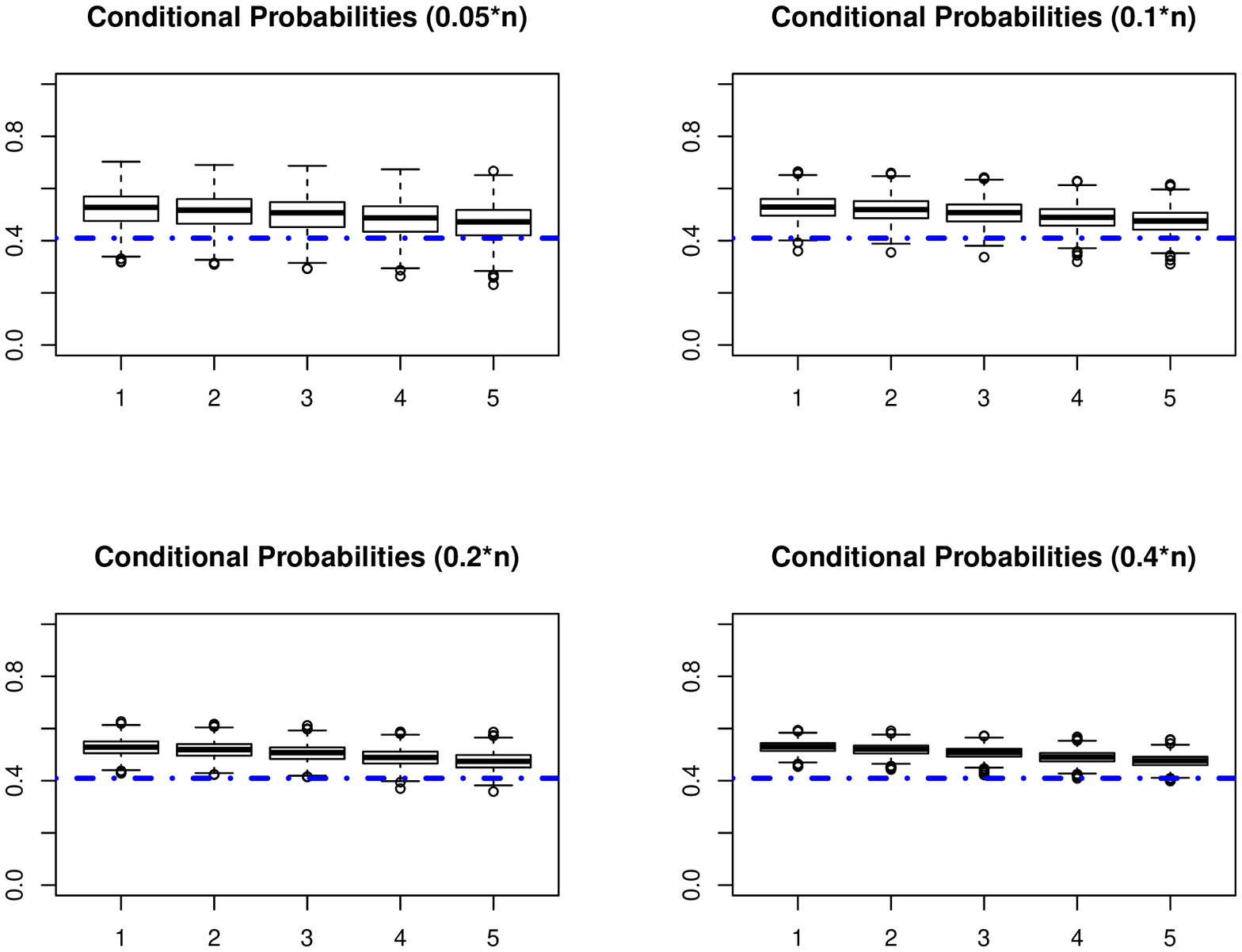}
\end{center}
\caption{\small{Estimation of {\rm TDC} for the model $Y=\phi X+\sigma |Z|$ with $\phi=0.8$, $\alpha=4$, $\sigma=0.1$. The dotted line shows the true value $\phi^{\alpha}$. Estimators $\hat T_n^{(2),\hat\alpha}(1;1)$ computed for $\hat\alpha$ obtained by the Hill estimator based on $5\%$ (top left), $10\%$ (top right), $20\%$ (bottom left) and $40\%$ (bottom right) order statistics. Each figure shows the boxplots for estimated values of the conditional probability computed for five different values of $k$. The first boxplot is computed based on $40\%$ of observations, the second one based on $30\%$ of observations, and the remaining ones based
on $20\%$, $10\%$ and $5\%$.  }}
\end{figure}

\subsection{Bivariate $t$}
We simulate 1000 observations from the bivariate $t$-distribution, that is $(X,Y)=\sqrt{W}(|Z_1|,|Z_2|)$, where $\alpha/W$ is chi-square with $\alpha=4$ degrees of freedom and $(Z_1,Z_2)$ are standard normal with correlation $\phi=0.9$.
In this case the tail dependence coefficient is $0.63$, see \cite{demarta:mcneiil:2004} .

\begin{figure}
\label{fig:1-t}
\begin{center}
\includegraphics[width=0.9\textwidth,height=0.4\textheight]{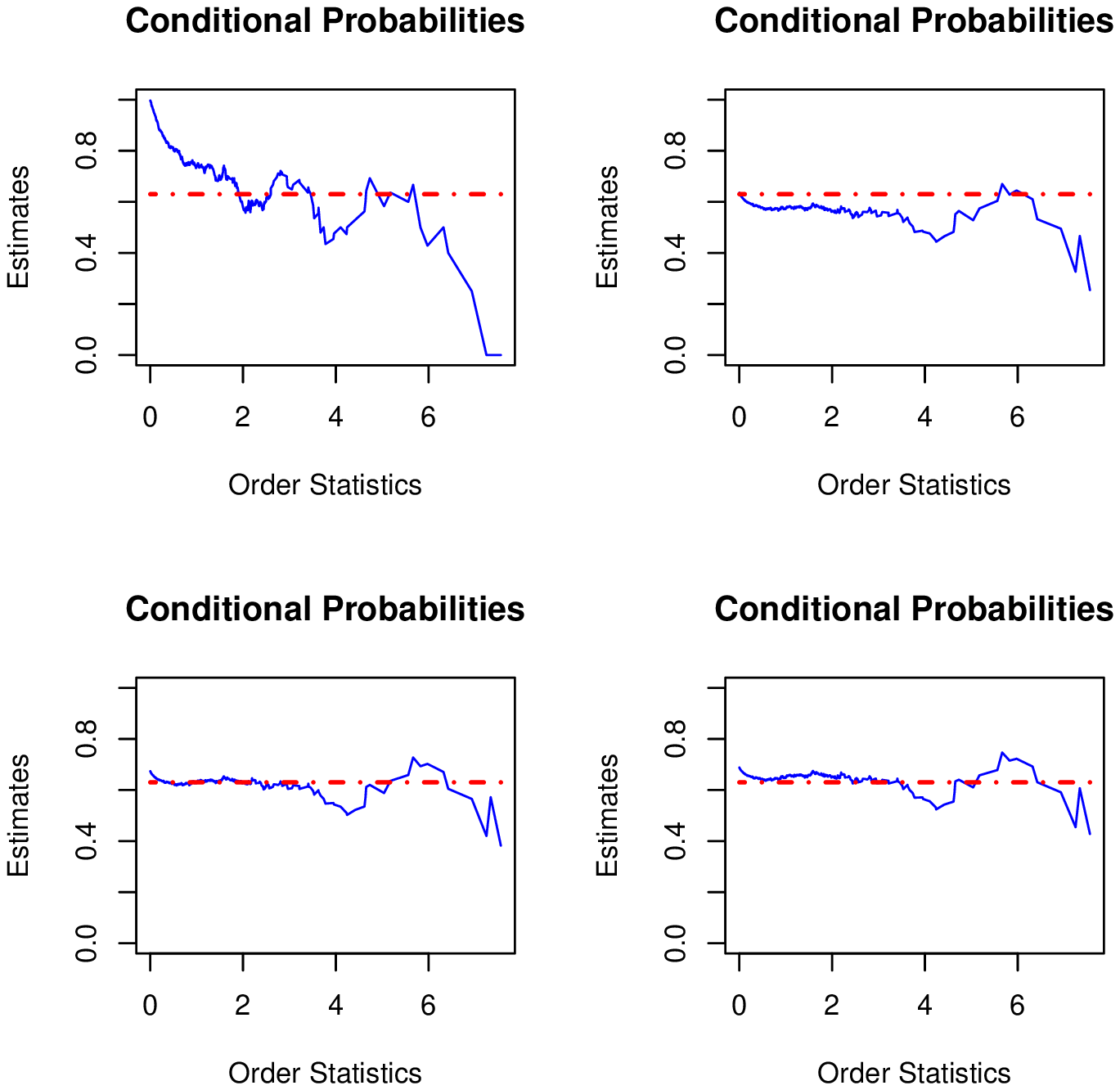}
\end{center}
\caption{\small{Estimation of {\rm TDC} for the bivariate $t$. Top line, left: estimator $\hat T_n^{(1)}(1;1)$; top line, right: estimator $\hat T_n^{(2)}(1;1)$; bottom line: estimators $\hat T_n^{(2),\hat\alpha}(1;1)$, where $\alpha$ is estimated using the Hill estimator based on 10\% (left picture) and 20\% (right picture) of order statistics.} }
\end{figure}

\begin{figure}
\label{fig:2a-t}
\begin{center}
\includegraphics[width=0.9\textwidth,height=0.35\textheight]{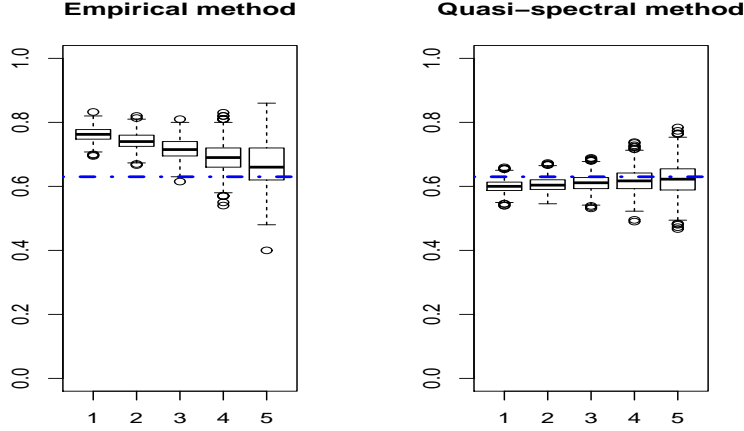}
\end{center}
\caption{\small{Estimation of {\rm TDC} for the bivariate $t$. Left panel: estimator $\hat T_n^{(1)}(1;1)$; right panel: estimator $\hat T_n^{(2)}(1;1)$. Each figure shows the boxplots for estimated values of the conditional probability computed for five different values of $k$. The first boxplot is computed based on $40\%$ of observations, the second one based on $30\%$ of observations, and the remaining ones based
on $20\%$, $10\%$ and $5\%$.  }}
\end{figure}

\begin{figure}
\label{fig:2b-t}
\begin{center}
\includegraphics[width=0.9\textwidth,height=0.45\textheight]{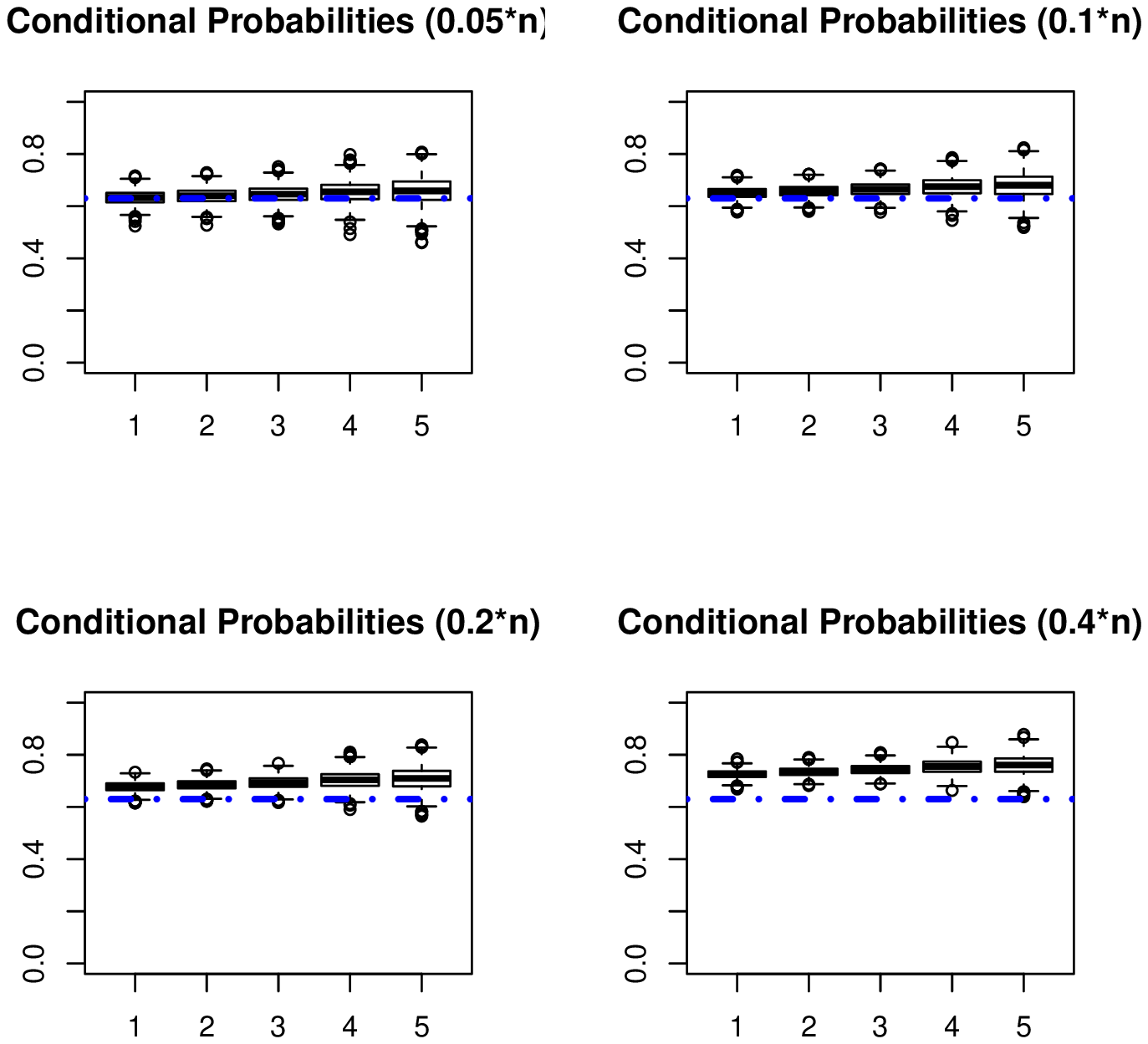}
\end{center}
\caption{\small{Estimation of {\rm TDC} for the bivariate $t$. Estimators $\hat T_n^{(2),\hat\alpha}(1;1)$ computed for $\hat\alpha$ obtained by the Hill estimator based on $5\%$ (top left), $10\%$ (top right), $20\%$ (bottom left) and $40\%$ (bottom right) order statistics. Each figure shows the boxplots for estimated values of the conditional probability computed for five different values of $k$. The first boxplot is computed based on $40\%$ of observations, the second one based on $30\%$ of observations, and the remaining ones based
on $20\%$, $10\%$ and $5\%$.  }}
\end{figure}

\section{Data Analysis}
We analyse absolut log-returns of S\&P500 and NASDAQ composite indices from January 2, 2013 until June 24, 2014. The scatter plot indicates strong dependence in the upper tail. This is confirmed by the estimation of the tail dependence coefficient. Again, the quasi-spectral method is less variable than the empirical one and robust with respect to the number $k$ of the order statistics and estimation of $\alpha$.

\begin{figure}
\label{fig:data-1}
\begin{center}
\includegraphics[width=0.9\textwidth,height=0.3\textheight]{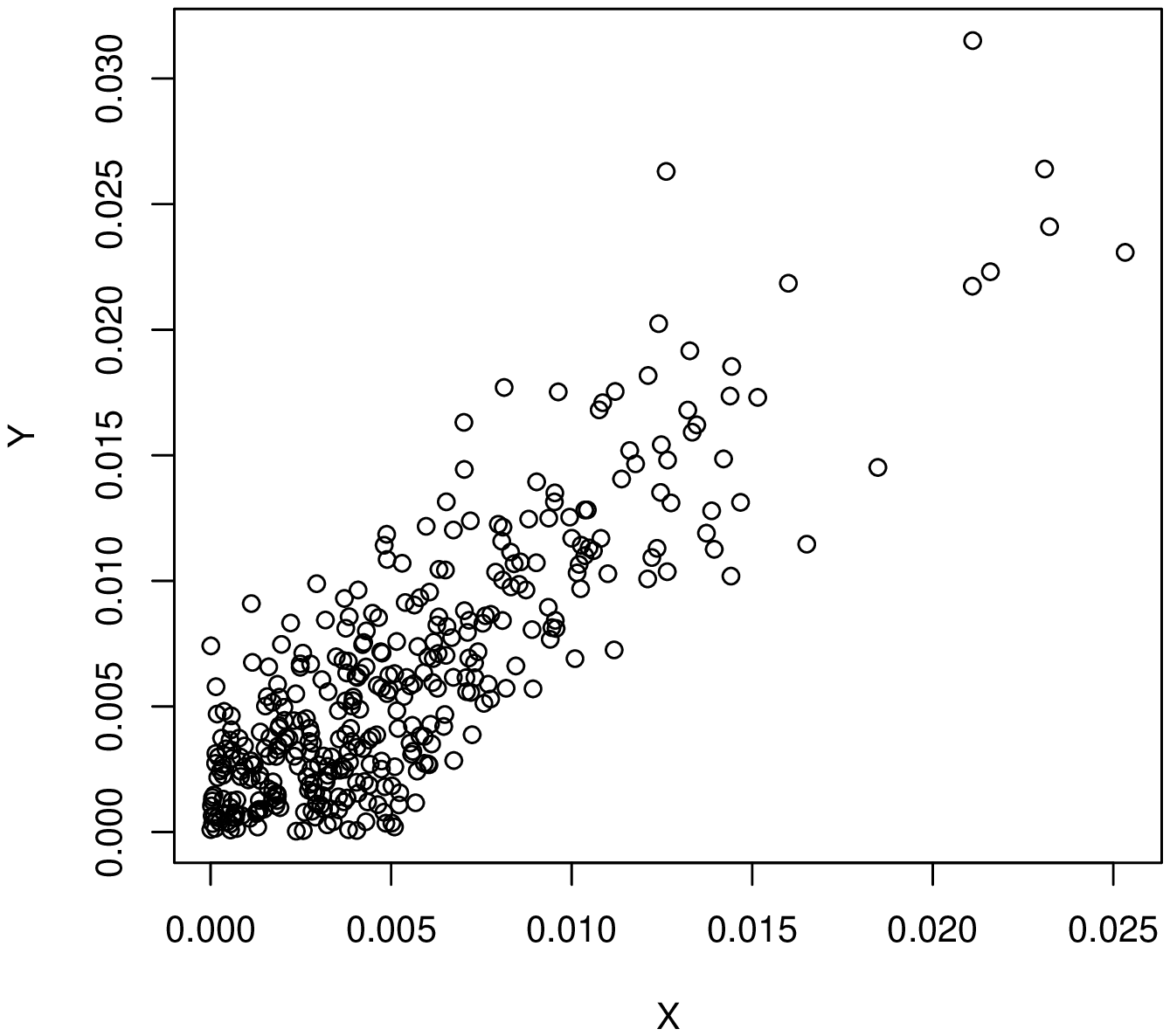}
\end{center}
\caption{\small{Scatter plot for S\&P vs. NASDAQ }}
\end{figure}

\begin{figure}
\label{fig:data-2}
\begin{center}
\includegraphics[width=0.9\textwidth,height=0.3\textheight]{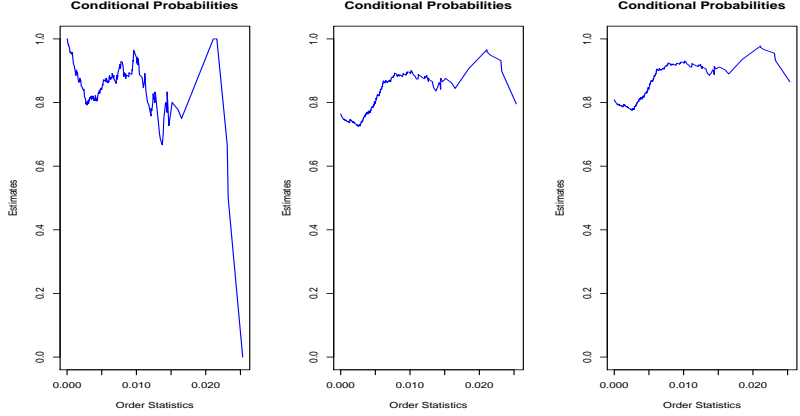}
\end{center}
\caption{\small{Estimation of TDC for S\&P and NASDAQ. Left plot: empirical method; middle plot: quasi-spectral method with $k_{\alpha}=0.1n$; right plot: quasi-spectral method with $k_{\alpha}=0.2n$ }}
\end{figure}

\section{Technical Details}\label{sec:technical-details}
We state the following lemma without a proof.
\begin{lemma}\label{lem:cond-exp}
Let $\vectorbold{X}$ be a regularly varying random vector such that all components are regularly varying with the same index $-\alpha$. Let $\psi:\Rset^d\to \Rset_+$ be homogenous with index $\gamma$ and assume that for some $\delta>0$,
\begin{align}\label{eq:condition-on-psi}
\int_C \psi^{1+\delta}(\vectorbold{v})\numult{} (\rmd \vectorbold{v})<\infty \; .
\end{align}
Then for $s>\epsilon$ and a relatively compact set $C$ in $\uncompact{d}$ we have
\begin{align*}
\lim_{x\to\infty}\frac{1}{\tail{F}(x)}\esp\left[\psi\left(\frac{\vectorbold{X}}{x}\right)\1{\{\vectorbold{X}\in sx C\}}\right]=s^{\gamma-\alpha}\int_C \psi(\vectorbold{v})\numult{}(\rmd \vectorbold{v})\; .
\end{align*}
\end{lemma}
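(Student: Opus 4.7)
The plan is to derive this from the defining vague convergence of multivariate regular variation, combined with a truncation step to accommodate a possibly unbounded $\psi$ through the moment bound $\int_C\psi^{1+\delta}\,\numult{}(\rmd\vectorbold{v})<\infty$.

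First, I would reduce to the case $s=1$. The homogeneity $\psi(s\vectorbold{w})=s^{\gamma}\psi(\vectorbold{w})$ together with the equivalence $\vectorbold{X}/x\in sC\Leftrightarrow\vectorbold{X}/(sx)\in C$ gives
\begin{align*}
\frac{1}{\tail{F}(x)}\esp\left[\psi(\vectorbold{X}/x)\1{\{\vectorbold{X}/x\in sC\}}\right]
= s^{\gamma}\,\frac{\tail{F}(sx)}{\tail{F}(x)}\cdot \frac{1}{\tail{F}(sx)}\esp\left[\psi(\vectorbold{X}/(sx))\1{\{\vectorbold{X}/(sx)\in C\}}\right].
\end{align*}
Since $\tail{F}$ is regularly varying with index $-\alpha$, the prefactor converges to $s^{\gamma-\alpha}$; hence it suffices to prove the $s=1$ case, with target limit $\int_C\psi\,\numult{}(\rmd\vectorbold{v})$.

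Second, for bounded, $\numult{}$-a.e.\ continuous $\psi$ supported on a relatively compact set whose boundary is $\numult{}$-null, the result follows immediately from the definition of multivariate regular variation. Indeed, the rescaled measures $\mu_x(\cdot):=\tail{F}(x)^{-1}\pr(\vectorbold{X}/x\in\cdot)$ converge vaguely to $\numult{}$ on $\uncompact{d}$, and the portmanteau theorem for vague convergence on locally compact spaces then yields $\int_C\psi\,\mu_x(\rmd\vectorbold{v})\to\int_C\psi\,\numult{}(\rmd\vectorbold{v})$.

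Third, to remove the boundedness restriction I would truncate at level $M$: $\psi=(\psi\wedge M)+(\psi-M)_+$. The bounded piece is handled by the previous step, while the tail is controlled by the elementary inequality $(\psi-M)_+\leq M^{-\delta}\psi^{1+\delta}$, giving
\begin{align*}
\frac{1}{\tail{F}(x)}\esp\left[(\psi-M)_+(\vectorbold{X}/x)\1{\{\vectorbold{X}/x\in C\}}\right]
\leq M^{-\delta}\,\frac{1}{\tail{F}(x)}\esp\left[\psi^{1+\delta}(\vectorbold{X}/x)\1{\{\vectorbold{X}/x\in C\}}\right].
\end{align*}
The main obstacle is obtaining a uniform-in-$x$ bound on the right-hand side. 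Applying the bounded-case conclusion to $\psi^{1+\delta}\wedge L$ for each $L>0$ yields
\begin{align*}
\limsup_{x\to\infty}\frac{1}{\tail{F}(x)}\esp\left[(\psi^{1+\delta}\wedge L)(\vectorbold{X}/x)\1{\{\vectorbold{X}/x\in C\}}\right]
=\int_C(\psi^{1+\delta}\wedge L)\,\numult{}(\rmd\vectorbold{v})\leq\int_C\psi^{1+\delta}\,\numult{}(\rmd\vectorbold{v})<\infty,
\end{align*}
and a diagonal-limit argument (letting $x\to\infty$ first, then $L\to\infty$, together with monotone convergence on the limit side) upgrades this to the required uniform control. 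Letting $M\to\infty$ then completes the proof. This ``uniform integrability plus vague convergence'' step is the genuinely delicate point, and follows the standard pattern of Chapter 6 of Resnick (2007).
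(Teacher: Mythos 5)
The paper states Lemma~\ref{lem:cond-exp} explicitly \emph{without} proof, so there is no official argument to compare against; I can only assess your proposal on its own terms. The reduction to $s=1$ by homogeneity and regular variation of $\tail{F}$ is correct, the treatment of the bounded part $\psi\wedge M$ by vague convergence is correct (modulo the standard implicit requirements that $\numult{}(\partial C)=0$ and that $\psi\1{C}$ be $\numult{}$-a.e.\ continuous, which the lemma does not state but which you rightly flag), and the elementary bound $(\psi-M)_+\le M^{-\delta}\psi^{1+\delta}$ is fine.

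The gap is precisely at the step you call ``genuinely delicate,'' and the argument you sketch for it does not work. Applying the bounded case to $\psi^{1+\delta}\wedge L$ gives, for each fixed $L$,
\begin{align*}
\lim_{x\to\infty}\frac{1}{\tail{F}(x)}\esp\left[(\psi^{1+\delta}\wedge L)\left(\frac{\vectorbold{X}}{x}\right)\1{\{\vectorbold{X}\in xC\}}\right]=\int_C(\psi^{1+\delta}\wedge L)\,\numult{}(\rmd\vectorbold{v})\;,
\end{align*}
and taking $L\to\infty$ \emph{after} $x\to\infty$ only yields the Fatou-type lower bound $\liminf_x\tail{F}(x)^{-1}\esp[\psi^{1+\delta}(\vectorbold{X}/x)\1{\{\vectorbold{X}\in xC\}}]\ge\int_C\psi^{1+\delta}\rmd\numult{}$. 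What you need is an upper bound on the $\limsup$, and the inequality $\limsup_x\sup_L\ge\sup_L\limsup_x$ goes the wrong way: no diagonal argument in $(x,L)$ can convert a moment condition on the \emph{limit} measure into a uniform-in-$x$ moment bound on the pre-limit measures $\mu_x=\tail{F}(x)^{-1}\pr(\vectorbold{X}/x\in\cdot)$. Indeed, vague convergence plus $\int_C\psi^{1+\delta}\rmd\numult{}<\infty$ does not in general give $\int_C\psi\,\rmd\mu_x\to\int_C\psi\,\rmd\numult{}$: a vanishing amount of $\mu_x$-mass located where $\psi$ is correspondingly large is invisible in the vague limit yet can dominate the expectation. (Recall that relatively compact sets in $\uncompact{d}$ may contain points at infinity, so $\psi$ is genuinely unbounded on $C$ in the cases of interest, e.g.\ $\psi(v_1,v_2)=v_2$ on $C=\{v_1>1\}$.) To close the gap one needs a genuine pre-limit estimate, for instance a Potter-bound/Karamata argument exploiting regular variation of the marginals to show $\pr(\psi(\vectorbold{X}/x)>t,\,\vectorbold{X}\in xC)\le K t^{-\beta}\tail{F}(x)$ uniformly in $x\ge x_0$, $t\ge t_0$ for some $\beta>1$, or else an explicit uniform-integrability hypothesis on the law of $\vectorbold{X}/x$. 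It is telling that the authors themselves impose exactly such a pre-limit condition in (\ref{eq:moment-bound}) of Corollary~\ref{cor:lim-cond-exp}; that assumption would be redundant if the limit-measure condition (\ref{eq:condition-on-psi}) alone did the job.
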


\subsection{Proof of Proposition \ref{prop:quasi-spectral}}\label{sec:proof-of-quasispectral}
\begin{proof}
Since $\vectorbold{X}$ is regularly varying we have for $A\subseteq \Rset^{d-1}$,
\begin{align*}
\lim_{x\to\infty}\frac{\pr(x^{-1}\vectorbold{X}\in (y,\infty]\times A)}{\pr(X_{1}>x)}=
\frac{\numult{}((y,\infty]\times A)}{\numult{}((1,\infty]\times \Rset^{d-1})}\; .
\end{align*}
If moreover $y\geq 1$, the left hand side becomes the conditional probability
\begin{align*}
\lim_{x\to\infty}\pr(x^{-1}\vectorbold{X}\in (y,\infty]\times A\mid X_{1}>x)\; .
\end{align*}
In other words, conditionally on $X_{1}>x$, $x^{-1}\vectorbold{X}$ converges weakly to a random vector, say $\vectorbold{V}=(V_1,\ldots,V_d)$. Therefore, for any $f:\Rset^d\to \Rset$ bounded and continuous we have
\begin{align*}
\lim_{x\to\infty}\esp\left[f\left(x^{-1}\vectorbold{X}\right)\mid X_{1}>x\right]=\esp[f(\vectorbold{V})]\; .
\end{align*}
Now, let $g:\Rset^d\to \Rset$ be bounded and continuous. Then
\begin{align*}
\esp\left[g\left(\frac{X_1}{x},\frac{X_2}{X_1},\ldots,\frac{X_d}{X_1}\right)\mid X_1>x\right]=
\esp\left[f\left(\frac{X_1}{x},\frac{X_2}{x},\ldots,\frac{X_d}{x}\right)\mid X_1>x\right]\;,
\end{align*}
where $f(u_1,\ldots,u_d)=g(u_1,u_2/u_1,\ldots,u_d/u_1)$ is also bounded and continuous whenever $u_1\geq 1$. Hence,
 \begin{align*}
\lim_{x\to\infty}\esp\left[g\left(\frac{X_1}{x},\frac{X_2}{X_1},\ldots,\frac{X_d}{X_1}\right)\mid X_1>x\right]=
\esp\left[g(V_1,V_2/V_1,\ldots,V_d/V_1)\right]\; .
\end{align*}
Hence, conditionally on $X_1>x$,
\begin{align*}
\left(\frac{X_1}{x},\frac{X_2}{X_1},\ldots,\frac{X_d}{X_1}\right)
\end{align*}
converges in distribution to $(V_1,V_2/V_1,\ldots,V_d/V_1)=(V_1,\Theta_2,\ldots,\Theta_d)$.
It is obvious that $V_1$ has a standard Pareto distribution. We claim that
$V_1$ is independent of $(\Theta_2,\ldots,\Theta_d)$. Indeed, for $A_i\subseteq \Rset$, $i=2,\ldots, d$,
\begin{align*}
&\pr\left(\frac{X_1}{x}>y, \frac{X_2}{X_1}\in A_2, \cdots, \frac{X_d}{X_1}\in A_d\mid X_1>x\right)\\
&=\pr\left(\frac{X_2}{X_1}\in A_2, \cdots, \frac{X_d}{X_1}\in A_d\mid X_1>xy\right)\frac{\pr(X_1>xy)}{\pr(X_1>x)}\\
&\rightarrow \pr\left(\frac{V_2}{V_1}\in A_2, \cdots, \frac{V_d}{V_1}\in A_d\right)\pr(V_1>y)\;, \qquad x\to\infty\; .
\end{align*}
On the other hand,
\begin{align*}
&\lim_{x\to\infty}\pr\left(\frac{X_1}{x}>y, \frac{X_2}{X_1}\in A_2, \cdots, \frac{X_d}{X_1}\in A_d\mid X_1>x\right)= \pr\left(V_1>y, \frac{V_2}{V_1}\in A_2, \cdots, \frac{V_d}{V_1}\in A_d\right)\; .
\end{align*}
Hence, $(\Theta_2=V_2/V_1, \ldots, \Theta_d=V_d/V_1)$ and $V_1$ are independent.
\end{proof}

\subsection{Proof of Theorem \ref{theo:weakconv}}
The proof is relatively standard, but we provide it for completeness.
We start with the central limit theorem. Multivariate convergence follows by the Cramer-Wald device.
We prove the result only for $\process[G]_n(\cdot;\psi,C)$.
\begin{lemma}
Under the conditions of Theorem \ref{theo:weakconv}, for each $s\geq s_0$,
$\process[G]_n(s;\psi,C)$ converges in distribution to a centered normal random variable.
\end{lemma}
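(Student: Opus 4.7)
The plan is to apply a Lyapunov central limit theorem for row-wise i.i.d.\ triangular arrays. Fix $s\geq s_0$ and set
$$\xi_{n,j}=\psi\left(\frac{X_j}{u_n},\frac{Y_j}{u_n}\right)\1{\{(X_j,Y_j)\in su_n C\}},$$
so that
$$\process[G]_n(s;\psi,C)=\frac{1}{\sqrt{n\tail{F}(u_n)}}\sum_{j=1}^n\bigl(\xi_{n,j}-\esp[\xi_{n,j}]\bigr)$$
is a normalized sum of i.i.d.\ centered variables. Hence the proof reduces to (a) computing the asymptotic variance and (b) verifying a Lyapunov-type negligibility condition.

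For (a), the function $\psi^2$ is homogeneous with index $2\gamma$ and assumption~(4) implies $\int_C(\psi^2)^{1+\delta/2}\,\rmd\numult{}<\infty$, so Lemma~\ref{lem:cond-exp} applied to $\psi^2$ yields
$$\frac{1}{\tail{F}(u_n)}\esp[\xi_{n,1}^2]\to s^{2\gamma-\alpha}\int_C\psi^2(v_1,v_2)\,\numult{}(\rmd v_1,\rmd v_2).$$
Applied to $\psi$ itself, the same lemma shows that $\esp[\xi_{n,1}]/\tail{F}(u_n)$ has a finite limit, so $n(\esp\xi_{n,1})^2/(n\tail{F}(u_n))=\tail{F}(u_n)\cdot\bigl(\esp\xi_{n,1}/\tail{F}(u_n)\bigr)^2\to 0$. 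Combining, $n\var(\xi_{n,1})/(n\tail{F}(u_n))$ converges to the claimed variance $s^{2\gamma-\alpha}\int_C\psi^2\,\rmd\numult{}$.

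For (b), I would verify Lyapunov's condition with exponent $2+\delta$. By the $c_r$-inequality together with Jensen's inequality, it suffices to bound $\esp[\xi_{n,1}^{2+\delta}]$ by a constant multiple of $\tail{F}(u_n)$; once this is done,
$$\frac{n\,\esp\bigl[|\xi_{n,1}-\esp\xi_{n,1}|^{2+\delta}\bigr]}{(n\tail{F}(u_n))^{1+\delta/2}}=O\bigl((n\tail{F}(u_n))^{-\delta/2}\bigr)\to 0$$
by assumption~(1). For the moment bound I would truncate $\psi$ at level $M$: the bounded piece satisfies $\psi^{2+\delta}\1{\{\psi\leq M\}}\leq M^\delta\psi^2$ and is handled by Lemma~\ref{lem:cond-exp} applied to $\psi^2$, while the tail piece is asymptotically controlled by $\int_{\{\psi>M\}\cap C}\psi^{2+\delta}\,\rmd\numult{}$, which vanishes as $M\to\infty$ by assumption~(4).

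The principal technical obstacle is precisely this last moment bound: assumption~(4) supplies exactly the borderline $(2+\delta)$-th moment of $\psi$ against $\numult{}$, so Lemma~\ref{lem:cond-exp} cannot be invoked directly at exponent $2+\delta$, and a truncation-plus-vague-convergence argument of the kind sketched above is needed to transfer the integrability from the exponent measure back to $\tail{F}(u_n)^{-1}\esp[\xi_{n,1}^{2+\delta}]$. Once this is in place, Lyapunov's CLT delivers the Gaussian limit with the announced variance.
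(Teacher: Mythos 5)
Your proposal is correct and follows essentially the same route as the paper: a triangular-array CLT for the i.i.d.\ centered summands, the asymptotic variance obtained from Lemma~\ref{lem:cond-exp} applied to $\psi^2$ (with the cross term killed because $\tail{F}(u_n)\to0$), and the negligibility condition reduced via Markov/Lyapunov to a $(2+\delta)$-moment bound that vanishes at rate $(n\tail{F}(u_n))^{-\delta/2}$. The only difference is cosmetic: the paper invokes Lemma~\ref{lem:cond-exp} directly on $\psi^{2+\delta}$ at the borderline exponent (which strictly requires shrinking $\delta$ slightly in the Lyapunov step), whereas you handle that same borderline by truncation; both fixes are standard and equally valid.
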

\begin{proof}
We prove the central limit theorem by checking Lindeberg's conditions.
Let
\begin{align*}
Z_{n,j}(s;C)=\frac{1}{\sqrt{n\tail{F}(u_n)}}\left\{\psi\left(\frac{X_{j}}{u_n},\frac{Y_{j}}{u_n}\right) \1{\{(X_{j},Y_{j})\in su_n C\}}-\esp\left[\psi\left(\frac{X_{j}}{u_n},\frac{Y_{j}}{u_n}\right) \1{\{(X_{j},Y_{j})\in su_n C\}}\right]\right\}\;
\end{align*}
so that $\process[G]_n(s;C)=\sum_{j=1}^n Z_{n,j}(s;C)$. Clearly, $\esp[Z_{n,j}(s;C)]=0$. Furthermore,
\begin{align*}
\var(\process[G]_n(s;\psi,C))&= \frac{1}{\tail{F}(u_n)}\esp\left[\psi^2\left(\frac{X}{ u_n},\frac{Y}{u_n}\right) \1{\{(X,Y)\in su_n C\}}\right] \\
&\phantom{=}-
\tail{F}(u_n)\left(\frac{1}{\tail{F}(u_n)}\esp\left[\psi\left(\frac{X}{u_n},\frac{Y}{u_n}\right) \1{\{(X,Y)\in su_n C\}}\right]\right)^2\; .
\end{align*}
Since $\tail{F}(u_n)\to 0$ as $n\to\infty$, Lemma \ref{lem:cond-exp} implies that the first term dominates and
$\lim_{n\to\infty}\var(\process[G]_n(s;\psi,C))$ exists. \\

Furthermore, noting that for arbitrary $\delta>0$ and any random variable $\1{\{|Y|>c\}}\leq |Y|^{\delta}/c^{\delta}$, we have
\begin{align*}
\esp[Z_{n,j}^2(s;C)\1{\{|Z_{n,j}|>\delta\}}]&\leq \frac{1}{ (n\tail{F}(u_n))^{\delta/2}}
\esp[|Z_{n,j}(s;C)|^{2+\delta}] \\
&\leq
\frac{K}{ (n\tail{F}(u_n))^{1+\delta/2}}
\esp\left[\psi^{2+\delta}\left(\frac{X_{1}}{u_n},\frac{X_{2}}{u_n}\right) \1{\{(X_{1},X_{2})\in su_n C\}}\right]
\end{align*}
and hence
\begin{align*}
\sum_{j=1}^n \esp[Z_{n,j}^2(s;C)\1{\{|Z_{n,j}|>\delta\}}]& \leq K (n\tail{F}(u_n))^{-\delta/2}
\left\{\frac{1}{\tail{F}(u_n)}\esp\left[\psi^{2+\delta}\left(\frac{X_{1}}{u_n},\frac{X_{2}}{u_n}\right) \1{\{(X_{1},X_{2})\in su_n C\}}\right]\right\}\; .
\end{align*}
Using Lemma \ref{lem:cond-exp} and since $\delta>0$, the expression on the right hand side converges to 0.
\end{proof}
\begin{lemma}
Under the conditions of Theorem \ref{theo:weakconv} the sequence of processes $\{\process[G]_n(\cdot;\psi,C)\}$, $n\geq 1$,  is tight in $\spaceD([s_0,\infty))$ equipped with the Skorokhod topology.
\end{lemma}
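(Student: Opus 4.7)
The plan is to verify a moment criterion on each compact subinterval $[s_0,M]$ and then argue asymptotic negligibility on $[M,\infty)$ as $M\to\infty$, so that tightness on $\spaceD([s_0,\infty))$ follows.

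The key structural observation is that condition~(3) combined with $\psi\geq 0$ makes $s\mapsto \psi(X_j/\tepseq,Y_j/\tepseq)\1{\{(X_j,Y_j)\in s\tepseq C\}}$ non-increasing in $s$, and for $s_0\leq s_1\leq s_2\leq s_3$ the sets $s_1 C\setminus s_2 C$ and $s_2 C\setminus s_3 C$ are disjoint. Hence for $s\leq t$,
\begin{align*}
\process[G]_n(s;\psi,C)-\process[G]_n(t;\psi,C) = \frac{1}{\sqrt{n\tail{F}(\tepseq)}}\sum_{j=1}^n\bigl(\xi_j-\esp\xi_j\bigr),
\end{align*}
where the variables $\xi_j=\psi(X_j/\tepseq,Y_j/\tepseq)\1{\{(X_j,Y_j)\in\tepseq(sC\setminus tC)\}}$ are i.i.d.\ and non-negative. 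Applying Rosenthal's inequality with exponent $2+\delta$: the variance contribution evaluated via Lemma~\ref{lem:cond-exp} applied to $\psi^2$ (whose integrability follows from condition~(4) and H\"older's inequality on the relatively compact part of $C$) yields a term of order $K(s^{2\gamma-\alpha}-t^{2\gamma-\alpha})^{1+\delta/2}$, while the $(2+\delta)$-th moment contribution carries a vanishing prefactor $(n\tail{F}(\tepseq))^{-\delta/2}\to 0$ and is absorbed by a Lipschitz bound on the compact interval. The resulting estimate
\begin{align*}
\esp\bigl|\process[G]_n(s;\psi,C)-\process[G]_n(t;\psi,C)\bigr|^{2+\delta} \leq K|H(s)-H(t)|^{1+\delta/2},
\end{align*}
holds for $n$ large, with $H(s)=-s^{2\gamma-\alpha}\int_C\psi^2\,d\numult{}$ continuous and monotone on $[s_0,M]$; since $1+\delta/2>1$, Billingsley's moment criterion gives tightness of $\process[G]_n(\cdot;\psi,C)$ on $\spaceD([s_0,M])$.

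To extend to $[s_0,\infty)$, the monotonicity of $\tilde T_n$ and $T_n$ combined with the same variance calculation yields $\sup_{s\geq M}|\process[G]_n(s;\psi,C)-\process[G]_n(M;\psi,C)|=O_P(M^{\gamma-\alpha/2})$, which tends to $0$ as $M\to\infty$ because the homogeneity of $\numult{}$ together with condition~(4) forces $2\gamma<\alpha$. The main obstacle is making the $(2+\delta)$-th moment bound precise: one must check, uniformly in $(s,t)\in[s_0,M]^2$, that the prefactor $(n\tail{F}(\tepseq))^{-\delta/2}$ genuinely absorbs $\esp[\psi^{2+\delta}(X/\tepseq,Y/\tepseq)\1{\{(X,Y)\in\tepseq(sC\setminus tC)\}}]/\tail{F}(\tepseq)$, which is done via the scaling identity $\int_{sC}\psi^{2+\delta}\,d\numult{} = s^{(2+\delta)\gamma-\alpha}\int_C\psi^{2+\delta}\,d\numult{}$ (finite by condition~(4)) together with Lipschitz regularity of $s\mapsto s^{(2+\delta)\gamma-\alpha}$ on $[s_0,M]$.
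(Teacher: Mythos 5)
Your overall strategy (a moment bound on increments feeding into Billingsley's criterion) is reasonable, but the key step fails as stated. Rosenthal's inequality applied to the centered i.i.d.\ sum gives, for $s\le t$,
\begin{align*}
\esp\bigl|\process[G]_n(s;\psi,C)-\process[G]_n(t;\psi,C)\bigr|^{2+\delta}
\le K\Bigl(g_n(s,t;2)^{1+\delta/2} \;+\; \bigl(n\tail{F}(\tepseq)\bigr)^{-\delta/2}\,g_n(s,t;2+\delta)\Bigr)\;,
\end{align*}
and while the first term is indeed of the required form $|H(t)-H(s)|^{1+\delta/2}$, the second term is only of order $a_n\,|G(t)-G(s)|$ with $a_n=(n\tail{F}(\tepseq))^{-\delta/2}\to 0$ and $G$ Lipschitz. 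A vanishing prefactor does not ``absorb'' this term: the tightness criterion requires a bound of the form $|H(t)-H(s)|^{1+\epsilon}$ \emph{uniformly over all pairs} $(s,t)$ in the compact, for each fixed $n$, and for $|t-s|\lesssim a_n^{2/\delta}=(n\tail{F}(\tepseq))^{-1}$ the remainder dominates and is only linear in $|t-s|$, so the criterion is violated precisely on the small scales that the criterion is designed to control. This is the classical obstruction for tail empirical processes; fixing it requires either (a) a separate control of the oscillation on scales below $1/k$ (e.g.\ via monotonicity of $s\mapsto \tilde T_n(s;\psi,C)$ and a grid argument), which you do not provide, or (b) the route the paper takes: the \emph{two-increment} form of Billingsley's Theorem 13.5, exploiting that for $s_1<t<s_2$ the summands satisfy $U_{n,j}(s_1,t)\,U_{n,j}(t,s_2)=0$ because the sets $(s_1u_nC)\setminus(tu_nC)$ and $(tu_nC)\setminus(s_2u_nC)$ are disjoint. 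That identity kills the diagonal contribution to $\esp[|\process[G]_n(s_1)-\process[G]_n(t)|^2|\process[G]_n(t)-\process[G]_n(s_2)|^2]$ and leaves only products of second moments, yielding a clean bound $K\,g_n^2(s_1,s_2;2)$ with exponent $2>1$ and no $n$-dependent remainder.

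Two further remarks. First, your passage from the single-increment to the two-increment criterion (via Cauchy--Schwarz) would be fine \emph{if} the single-increment bound held with exponent $1+\delta/2$ uniformly, so the gap is localized entirely in the Rosenthal remainder. Second, the extension to $[s_0,\infty)$ is unnecessary: tightness in $\spaceD([s_0,\infty))$ with the Skorokhod topology is equivalent to tightness of the restrictions to $\spaceD([s_0,M])$ for every $M$, so no separate control of $\sup_{s\ge M}$ is needed (your claim there is not wrong, just superfluous). You should also note that invoking Lemma~\ref{lem:cond-exp} to replace $g_n(s,t;2)$ by its limit requires uniformity in $(s,t)$, which follows from monotonicity of $s\mapsto g_n(s;2)$ and continuity of the limit, but deserves a word.
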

\begin{proof}In what follow, since the set $C$ is fixed, in our notation we omit a dependence on it, unless it is necessary.
For $s_0<s<t$, define $(s,t]u_nC=(su_nC)\setminus (t u_nC)$ and
\begin{align*}
U_{n,j}(s)=\psi\left(\frac{X_{j}}{u_n},\frac{Y_{j}}{u_n}\right)\1{\{(X_{j},Y_{j})\in su_nC\}}\;, \qquad
U_{n,j}^*(s)=U_{n,j}(s)-\esp[U_{n,j}(s)]\;,
\end{align*}
\begin{align*}
U_{n,j}(s,t)=U_{n,j}(s)-U_{n,j}(t)\;, \qquad
U_{n,j}^*(s,t)=U_{n,j}^*(s)-U_{n,j}^*(t) \; ,
\end{align*}
\begin{align*}
g_n(s;m)=\frac{1}{\tail{F}(u_n)}\esp\left[ \left|U_{n,j}(s)\right|^m\right]\;, \qquad g_n(s,t;m)=g_n(s;m)-g_n(t;m)\; .
\end{align*}
We note that $\lim_{n\to\infty}g_n(s;m)=s^{m\gamma-\alpha}\psi(C;m)$ uniformly on $[s_0,\infty)$. Then
\begin{align*}
\process[G]_n(s)-\process[G]_n(t)=\frac{1}{\sqrt{n\tail{F}(u_n)}}\sum_{j=1}^nU_{n,j}^*(s,t)\; ,
\end{align*}
where we write shortly $\process[G]_n(s)$ for $\process[G]_n(s;\psi,C)$.
We use Theorem 13.5 in \cite{billingsley:1999}. For $s_0<s_1<t<s_2$ we have
\begin{align}\label{eq:tightness-3}
&\esp\left[|\process[G]_n(s_1)-\process[G]_n(t)|^2|\process[G]_n(t)-\process[G]_n(s_2)|^2\right] \nonumber\\
&=\frac{1}{(n\tail{F}(u_n))^2}\sum_{j=1}^n\esp[\left(U_{n,j}^*(s_1,t)U_{n,j}^*(t,s_2)\right)^2]+\frac{1}{(n\tail{F}(u_n))^2}\sum_{\substack{i,j\\i\ne j}}^n\esp[\left(U_{n,i}^*(s_1,t)\right)^2]\esp[\left(U_{n,j}^*(t,s_2)\right)^2]\; .
\end{align}
By noting that for $s_1<t<s_2$ we have $U_{n,j}(s_1,t)U_{n,j}(t,s_2)=0$, we evaluate
\begin{align*}
&\left(U_{n,j}^*(s_1,t) U_{n,j}^*(t,s_2)\right)^2\\
&=U_{n,j}(s_1,t)  \esp^2[U_{n,j}(t,s_2)] + U_{n,j}(t,s_2)  \esp^2[U_{n,j}(s_1,t)] \\
&\phantom{=}-2 U_{n,j}(s_1,t)\esp[U_{n,j}(s_1,t)]\esp^2[U_{n,j}(t,s_2)]
-2 U_{n,j}(t,s_s)\esp[U_{n,j}(t,s_s)]\esp^2[U_{n,j}(s_1,t)]\\
&\phantom{=} + \esp^2[U_{n,j}(s_1,t)]\esp^2[U_{n,j}(t,s_2)] \;,
\end{align*}
so that
\begin{align*}
&\frac{1}{\tail{F}^2(u_n)}\esp\left[ \left( U_{n,j}^*(s_1,t)  U_{n,j}^*(t,s_2)\right)^2\right]
\leq
3 \tail{F}(u_n)  g_n^3(s_1,s_2;1)\; .
\end{align*}
Next, we deal with the second term in (\ref{eq:tightness-3}). For $s<t$ we have
\begin{align*}
&\esp[(U_{n,j}^*(s,t))^2]
\le 4\esp[(U_{n,j}(s)-U_{n,j}(t))^2]\; .
\end{align*}
Hence, the term is bounded by
\begin{align*}
&\frac{1}{(n\tail{F}(u_n))^2}\esp[U_{n,1}^*(s_1,t)]\esp[U_{n,1}^*(t,s_2)]
=K g_n^2(s_1,s_2;2)\; .
\end{align*}
The tightness follows.
\end{proof}
\subsection{Proof of Corollary \ref{cor:order-stats}}
The argument is similar to that of \cite{rootzen:2009}.
\begin{itemize}
\item By Theorem~\ref{theo:weakconv} and the Skorokhod representation theorem, there exists a
  probability space, a sequence of processes $\{\tilde{\process[G]}_n^*(\cdot),\tilde{\process[G]}_n(\cdot;\psi,C)\}$ and processes $\tilde{\process[G]}^*(\cdot)$, $\tilde{\process[G]}(\cdot;\psi,C)$ with the same
  distributions as, respectively, $\{\process[G]_n^*(\cdot),\process[G]_n(\cdot;\psi,C)\}$, $\process[G]^*(\cdot)$ and $\process[G](\cdot;\psi,C)$, such that
    \begin{align}
      \label{eq:skorokhod_representation}
      \tilde{\process[G]}_n^*(\cdot)\to \tilde{\process[G]}^*(\cdot)\;, \qquad \tilde{\process[G]}_n(\cdot;\psi,C)\to \tilde{\process[G]}(\cdot;\psi,C)
    \end{align}
    almost surely, uniformly on compact subsets of $[s_0,\infty)$. In what follows, for simplicity of
    notation we will write $\process[G]_n(\cdot)$, $\process[G]_n(\cdot;\psi,C)$, $\process[G](\cdot)$ and $\process[G](\cdot;\psi,C)$.
  \item Let $T_n^{\leftarrow}$ and $({\tilde{T}_n})^{\leftarrow}$ be the right continuous
    inverses of $T_n$ and $\tilde{T}_n$, respectively.
    Then, $T_n^{\leftarrow}(1)=1$, $({\tilde{T}_n})^{\leftarrow}(1)=X_{n:n-k}/\tepseq$ and, since
    $F$ is continuous, for all $s\in[\tail{F}(0)/\tail{F}(\tepseq),0]$,
    $T_n(T_n^{\leftarrow}(s))=s$.
  \item The (random) functions $\process[G]_n^*$ and $\tilde T_n^{\leftarrow}$ belong to
    $\spaceD$. Furthermore, their almost sure limits $\process[G]^*$ and $T^{\leftarrow}$ are
    continuous and $T^{\leftarrow}$ is strictly decreasing. Hence, the convergence
    (\ref{eq:skorokhod_representation}) and Theorem 3.1 in \cite{whitt:1980} imply that
    \begin{align*}
      \process[G]_n^*(T_n^{\leftarrow}(s))=\sqrt{k}\left\{\tilde{T}_n\circ
        T_n^{\leftarrow}(s)-s\right\}\to \process[G]^*(T^{\leftarrow}(s))
    \end{align*}
    almost surely, uniformly on compact subsets of $[s_0,\infty)$.
  \item Vervaat Lemma (\cite[Lemma A.0.2]{dehaan:ferreira:2006}) implies that
    \begin{align*}
      \sqrt{k}\left\{(\tilde{T}_n\circ T_n^{\leftarrow})^{\leftarrow}(s)-s\right\} \to
      - \process[G]^*(T^{\leftarrow}(s))
    \end{align*}
    almost surely, uniformly on compact subsets of $[s_0,\infty)$.
  \item Assumption~(\ref{eq:tep-derivee-uniform-convergence}) implies that $T_n$ is
    continuous and strictly decreasing in a neighborhood of~1. Thus, there exists $\epsilon>0$ such
    that $T_n\circ (\tilde{T}_n)^{\leftarrow}(s) = (\tilde{T}_n\circ
    T_n^{\leftarrow})^{\leftarrow}(s)$ for $s\in(1-\epsilon,1+\epsilon)$ and
    \begin{align}
      \label{eq:randomlevels-vervaat}
      \sqrt{k}\left\{T_n\circ (\tilde{T}_n)^{\leftarrow}(s)-s\right\}\to
      -\process[G]^*(T^{\leftarrow}(s))\; ,
    \end{align}
    almost surely uniformly with respect to $s \in (1-\epsilon,1+\epsilon)$.
  \item Since $\statinterseq\to\infty$ and
    $(\tilde{T}_n)^{\leftarrow}(1)=\orderstat{n:n-\statinterseq}/\tepseq$,
    (\ref{eq:randomlevels-vervaat}) implies that
    $T_n(\orderstat{n:n-\statinterseq}/\tepseq)$ converges almost surely to~1. Since
    $T(1)=1$ and $T_n$ converges uniformly to $T$ in a neighborhood of~1,
    this implies that $\orderstat{n:n-\statinterseq}/\tepseq$ converges almost surely to~1.
  \item By Taylor's expansion, there exists $\varsigma_n$ such that $|\varsigma_n-1|\leq
    |(\tilde{T}_n)^{\leftarrow}(1)-1|$ and
    \begin{align}
      T_n((\tilde{T}_n)^{\leftarrow}(1))-1 & = T_n((\tilde{T}_n)^{\leftarrow}(1)) -
      T_n(T_n^{\leftarrow}(1)) \nonumber      \\
      & = T_n' (\varsigma_n) \left\{(\tilde{T}_n)^{\leftarrow}(1) -
        T_n^{\leftarrow}(1)\right\} \nonumber      \\
      & = T_n' (\varsigma_n) \left\{ \orderstat{n:n-\statinterseq}/\tepseq-1\right\} \; .
      \label{eq:randomlevels-expansion}
    \end{align}
  \item Thus, (\ref{eq:tep-derivee-uniform-convergence}), (\ref{eq:randomlevels-vervaat}) and
    (\ref{eq:randomlevels-expansion}) yield that
    \begin{align}
      \label{eq:randomlevels-conclusion}
      \sqrt{k} \left\{\frac{X_{n:n-k}}{\tepseq}-1\right\} \to \frac1\alpha\process[G]^*(1)\; ,
    \end{align}
    almost surely.
  \item Since the convergences $\process[G]_n(\cdot;\psi,C)\to\process[G](\cdot;\psi,C)$ and (\ref{eq:randomlevels-conclusion}) hold almost
    surely, they hold jointly. Coming back to the original probability space, we obtain the joint weak
    convergence.
  \end{itemize}

\subsection{Proof of Theorem \ref{theo:fclt-random-ext}}

\begin{proof}
  Denote $\hat T_n(s;\psi,C) = \tilde T_n(sX_{n:n-k}/\tepseq;\psi,C)$, where $\tilde T_n$ and $\hat T_n$ are the tail empirical functions defined in (\ref{eq:ted}) and (\ref{eq:ted-random}), respectively. Then, by the homogeneity property (\ref{eq:homogeneity-scaling}),
\begin{align*}
\hat{\process[G]}_n(s;\psi,C) & = \process[G]_n(sX_{n:n-k}/\tepseq;\psi,C) \\
    & \phantom{ = } + s^{\gamma-\alpha}\sqrt k  \left\{T(X_{n:n-k}/\tepseq;\psi,C) -
      T(1;\psi,C) \right\}=I_1(s)+s^{\gamma-\alpha}I_2(s)\; .
\end{align*}
By Corollary \ref{cor:order-stats} \begin{align}\label{eq:clt-orderstat}
\sqrt{k} \left\{\frac{X_{n:n-k}}{\tepseq}-1\right\} \convdistr \frac{1}{\alpha}\process[G]^*(1)\;,
\end{align}
jointly with $\process[G]_n(\cdot;\psi,C)$.
In particular, $X_{n:n-k}/u_n$ converges in probability to 1.
Thus, by Theorem~\ref{theo:weakconv}, the term $I_1$ converges weakly to $\process[G](\cdot;\psi,C)$, while
by the delta method the term $I_2(s)$ converges weakly to
\begin{align*}
\frac{1}{\alpha}T'(1;\psi,C)\process[G]^*(1)\; .
\end{align*}
This finishes the proof of (\ref{eq:fclt-random-ext}). Furthermore,
  \begin{align*}
    \hat{\hat{\process[G]}}_n(s;\psi,C) & = \left(\frac{X_{n:n-k}}{\tepseq}\right)^{-\gamma}
    \process[G]_n(sX_{n:n-k}/\tepseq;\psi,C) \\
    & \phantom{ = } + \sqrt k
    \left(\frac{X_{n:n-k}}{\tepseq}\right)^{-\gamma}
    \{T_n(sX_{n:n-k}/\tepseq;\psi,C) -
    T(sX_{n:n-k}/\tepseq;\psi,C) \}    \\
    & \phantom{ = } + \sqrt k
    \left\{\left(\frac{X_{n:n-k}}{\tepseq}\right)^{-\gamma} -
      1\right\} s^{\gamma-\alpha}T(X_{n:n-k}/\tepseq;\psi,C) \\
    & \phantom{ = } + \sqrt k s^{\gamma-\alpha} \left\{T(X_{n:n-k}/\tepseq;\psi,C) -
      T(1;\psi,C) \right\}\\
      &=\left(\frac{X_{n:n-k}}{\tepseq}\right)^{-\gamma}I_1(s)+J_1(s)+s^{\gamma-\alpha}J_2(s)+s^{\gamma-\alpha}I_2(s) \; .
  \end{align*}
Again, by Theorem \ref{theo:weakconv} and $X_{n:n-k}/u_n\convprob 1$, the first term converges weakly to $\process[G](\cdot;\psi,C)$.
The second term vanishes by (\ref{eq:bias}).  Furthermore, the delta method,
the first order Taylor expansion of $T(\cdot;\psi,C)$ around 1 and (\ref{eq:clt-orderstat}) yield that $s^{\gamma-\alpha}(J_2(s)+I_2(s))$ converges to
\begin{align*}
-\frac{\gamma}{\alpha}s^{\gamma-\alpha}T(1;\psi,C)\process[G]^*(1)+\frac{1}{\alpha}s^{\gamma-\alpha}T'(1;\psi,C)\process[G]^*(1)\;.
\end{align*}
The convergence (\ref{eq:fclt-random-random-ext}) is proven.
\end{proof}

\section{Additional comments and future research}
We finish our paper by addressing several technical issues and discussing directions of future research.

\begin{enumerate}
\item We assume regular variation of a vector $(X,Y)$ since we work under general framework of estimating (\ref{eq:to-estimate}). In specific examples, like conditional tail expectation, it is enough to assume that the limit $\lim_{x\to\infty}x^{-1}\esp[Y\mid X>x]$ exists and is strictly positive. This is done precisely in \cite{cai:einmahl:haan:zhou:2012}.
\item In expense of additional technical considerations one can study tightness with respect to a class of sets $C\in {\mathcal C}$, which in particular will imply tightness with respect to $y$ in case of the conditional tail distribution.
\item The results are meaningful in case of extremal dependence, that is when the exponent measure is not concentrated on axes. In case of extremal independence, if one wants to estimate quantities like the conditional tail distribution or conditional tail expectation, a different scaling is required. We will address this issue in a following paper, based upon the ideas developed in \cite{heffernan:resnick:2007}, \cite{fougeres:soulier:2012}, \cite{kulik:soulier:2013a,kulik:soulier:2014}.
\item The quasi-spectral method should be compared with semiparametric or parametric ones. It could be particularly attractive in case of time series where very few parametric models for multivariate extremes are available.
\item We would like to address estimation of conditional tail expectation in a context of multivariate time series, using the tools developed in \cite{drees:rootzen:2010}.
\item It is a common practice in extreme value theory to standardize marginals. Assume that we have a positive bivariate vector $(X,Y)$ with marginal distribution functions $F_X$ and $F_Y$. Define
    \begin{align*}
    Q_X(t)=\left(\frac{1}{\tail{F}_X}\right)^{\leftarrow}(t)\;, \qquad Q_Y(t)=\left(\frac{1}{\tail{F}_Y}\right)^{\leftarrow}(t)\; .
    \end{align*}
    Then $Z=Q_X^{\leftarrow}(X)$ and $W=Q_Y^{\leftarrow}(Y)$ are standard Pareto. All results in the paper remain valid if one assumes that $(Z,W)$ is regularly varying (with index $\alpha=1$). If $(V_1',V_1'\Theta_2')$ is the quasi-spectral decomposition of $(Z,W)$, then $V_1'$ is standard Pareto, however $\Theta_2'$ still contains information about the marginal behaviour. For example, if we start with $(X,Y)$ being regularly varying with $-\alpha$ and $(V_1,V_1\Theta_2)$ is its quasi-spectral decomposition, then $\Theta_2'=\Theta_2^{\alpha}$. In other words, by transforming marginals we do not avoid the problem of estimating $\alpha$ in (\ref{eq:tef-2-estim}).
\end{enumerate}




\section*{Acknowledgement}
Research supported by NSERC grant.
\bibliographystyle{plain}
\bibliography{bib}


\end{document}